\newtheorem{theorem}{Theorem}[section]
\newtheorem{corollary}[theorem]{Corollary}
\newtheorem{lemma}[theorem]{Lemma}
\newtheorem{proposition}[theorem]{Proposition}
\newtheorem{remark}[theorem]{Remark}
\newenvironment{proof}[1][Proof]{\noindent\textbf{#1.} }{\ \rule{0.5em}{0.5em}}
\numberwithin{equation}{section}
\begin{document}

\title{On the Global Structure of Normal Forms for Slow-Fast Hamiltonian Systems}
\author{M. Avenda\~{n}o Camacho and Yu. Vorobiev\\
Deparment of Mathematics, University of Sonora\\
Rosales and Blvd. Luis Encinas Hermosillo, M\'exico, 83000\\
E-mail: yuvorob@gmail.com}

\maketitle

\begin{abstract}
In the framework of Lie transform and the global method of averaging, the normal forms of a multidimensional slow-fast Hamiltonian system are studied in the case when the flow of the unperturbed (fast) system is periodic and the induced $\mathbb{S}^{1}$-action  is not necessarily free and trivial. An intrinsic splitting of the second term in a $\mathbb{S}^{1}$-invariant normal form of first order is derived in terms of the Hannay-Berry connection associated with the periodic flow.

\end{abstract}

\section{Introduction}

In this paper, in the context of normal form, we deal with a class of
so-called slow-fast Hamiltonian systems \cite{ArKN-88,Ne-08} of the
form
\begin{equation}
\dot{y}=-\frac{\partial H}{\partial x}, \ \ \dot{x}=\frac{\partial
H}{\partial y}, \label{SF1}
\end{equation}%
\begin{equation}
\dot{p}=-\varepsilon\frac{\partial H}{\partial q}, \ \ \dot
{q}=\varepsilon\frac{\partial H}{\partial p}, \label{SF2}%
\end{equation}
where $(y,x)\in\mathbb{R}^{2r},$ $(p,q)\in\mathbb{R}^{2k}$  and $\varepsilon$
is a small perturbation parameter. System (\ref{SF1}), (\ref{SF2}) is
Hamiltonian relative to a function $H=H(p,q,y,x)$ and the $\varepsilon
$-dependent Poisson bracket $\{,\}=\{,\}_{0}+\varepsilon\{,\}_{1}$ on
$\mathbb{R}^{2r}\times\mathbb{R}^{2k}$. From the viewpoint of Hamiltonian
perturbation theory, this splitting of the Poisson bracket into a slow and a
fast part leads to the following unusual feature of the perturbed model: the
unperturbed (fast) system and the perturbation are Hamiltonian relative to
the different (nonisomorphic) Poisson structures. In this situation, one can
expect that the perturbation effects are not only representated by correction
terms in the Hamiltonian but also related to the rescaling of Poisson brackets.

We assume that the flow of the unperturbed system is periodic and
hence induces an $\mathbb{S}^{1}$-action which is canonical relative to the
fast Poisson bracket $\{,\}_{0}$. For $r >  1$, such a situation occurs in the
case when the unperturbed motion is described by a family of systems which are
superintegrable in the noncommutative sense \cite{Nekh88}. The normalization
question comes from the fact that the $\mathbb{S}^{1}$-action does not respect
the perturbation vector field of system (\ref{SF1}), (\ref{SF2}) since the
slow Poisson bracket $\{,\}_{1}$ is not $\mathbb{S}^{1}$-invariant in
general. The traditional  averaging procedure \cite{ArKN-88,Ne-08}
works within domains of (generalized) action-angles variables where the
original $\mathbb{S}^{1}$-action is trivial. We are interested in the global
structure of normal forms for system (\ref{SF1}), (\ref{SF2}) in the general
case when the $\mathbb{S}^{1}$-action is not necessarily free and trivial. Our
approach is based on the global averaging technique on $\mathbb{S}^{1}%
$-manifolds \cite{MisVor-12,Cush-84,Mos-70} which refers to
the flow on a phase space  rather than  to a local coordinate description.
One of the important tools here is the Hannay-Berry connection \cite{MaMoRa-90,Mon-88}
associated to the $\mathbb{S}^{1}$-action which naturally arises in the averaging procedure for symplectic and Poisson structures
\cite{Vor-08,Vor-11,VorMis-12}. We show that the Hamiltonian
vector field of system (\ref{SF1}), (\ref{SF2}) can be transformed by a
near-identity mapping to an $\mathbb{S}^{1}$-invariant normal form of first
order whose second term splits with respect the Hannay-Berry connection into
two parts $P_{\operatorname{hor}}$ and $P_{\operatorname{ver}}$  with the
following properties. The vertical component $P_{\operatorname{ver}}$ is a
Hamiltonian vector field relative to the fast Poisson bracket $\{,\}_{0}$ and
an $\mathbb{S}^{1}$-invariant function which interpreted as a first correction
to the Hamiltonian $H$. This interpretation is motivated by the fact that
system (\ref{SF1}), (\ref{SF2}) can be approximated by a Hamiltonian system with
$\mathbb{S}^{1}$-symmetry on a phase space equipped with a corrected Poisson
bracket. The horizontal component $P_{\operatorname{hor}}$ involves the
horizontal lift of the Poisson tensor on the slow $(p,q)$-space which
satisfies the Jacobi identity only in the case when the curvature of the
Hannay-Berry connection is zero. Therefore, in general, $P_{\operatorname{hor}%
}$  does not inherit any natural Hamiltonian structure. These results are
applied to the construction of approximate first integrals of system
(\ref{SF1}), (\ref{SF2}) and illustrated by some examples.

\section{Averaging and Integrating Operators}

In this section, we collect some facts concerning algebraic properties of
the averaging procedure on general $\mathbb{S}^{1}$-manifolds. For more
details, see, for example, \cite{MisVor-12,Cush-84,Mos-70}.

Suppose that on a manifold $M$ we are given a complete vector field $\Upsilon$
with $2\pi$-periodic flow, $\operatorname{Fl}_{\Upsilon}^{t+2\pi
}=\operatorname{Fl}_{\Upsilon}^{t}$. Then, we have an action on $M$ of the
circle $\mathbb{S}^{1}=\mathbb{R}\diagup2\pi\mathbb{Z}$ with infinitesimal
generator $\Upsilon$. \ Let us associated to this $\mathbb{S}^{1}$-action
\ the following operations. Denote by $\mathcal{T}_{s}^{k}(M)$ be the space of
all tensor fields on $M$ of type $(k,m)$ and by $\mathcal{L}_{X}%
:\mathcal{T}_{s}^{k}(M)\rightarrow\mathcal{T}_{s}^{k}(M)$ the Lie derivative
along $\Upsilon$. For every tensor field $A\in\mathcal{T}_{s}^{k}(M)$, its
average with respect to the $\mathbb{S}^{1}$-action is a tensor field  $\langle A\rangle$
$\in\mathcal{T}_{s}^{k}(M)$ of the same type which is defined by
\begin{equation}
\langle A\rangle:=\frac{1}{2\pi}\int_{0}^{2\pi}(\operatorname{Fl}_{\Upsilon}^{t})^{\ast
}Adt. \label{AV1}%
\end{equation}
This formula gives the global averaging operator associated to the $\mathbb{S}^{1}%
$-action on $M$. A tensor field $A\in\mathcal{T}_{s}^{k}(M)$ is said to be
invariant with respect to the $\mathbb{S}^{1}$-action if $(\operatorname{Fl}%
_{\Upsilon}^{t})^{\ast}A=A$  $(\forall t\in\mathbb{R})$ or, equivalently,
$\mathcal{L}_{\Upsilon}A=0$.  In terms of the $\mathbb{S}^{1}$-average of $A$
the $\mathbb{S}^{1}$-invariance condition reads $ A=\langle A\rangle$.

Introduce also the $\mathbb{R}$-linear operator $\mathcal{S}:\mathcal{T}_{s}^{k}(M)\rightarrow\mathcal{T}_{s}^{k}(M)$ given by%
\begin{equation}
\mathcal{S}(A):=\frac{1}{2\pi}\int_{0}^{2\pi}(t-\pi)(\operatorname{Fl}%
_{\Upsilon}^{t})^{\ast}Adt. \label{AV2}%
\end{equation}
Then, we have the following important algebraic identities involving the
operators $\mathcal{L}_{\Upsilon},\langle\ \rangle$ and $\mathcal{S}$ \cite{MisVor-12}.
\begin{lemma}
For every $A\in\mathcal{T}_{s}^{k}(M)$, the following identities hold%
\begin{equation}
\mathcal{L}_{\Upsilon}\circ\mathcal{S}(A)=A-\langle A\rangle, \label{Pr1}%
\end{equation}%
\begin{equation}
\langle\mathcal{L}_{\Upsilon}(A)\rangle=\mathcal{L}_{\Upsilon}\langle A\rangle=0, \label{Pr2}%
\end{equation}%
\begin{equation}
\langle\mathcal{S}(A)\rangle=\mathcal{S}(\langle A\rangle)=0. \label{Pr3}%
\end{equation}
\end{lemma}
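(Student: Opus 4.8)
The plan is to reduce all three identities to elementary manipulations of the one-parameter family of pullbacks $A_{t} := (\operatorname{Fl}_{\Upsilon}^{t})^{\ast} A$. Two facts are used throughout: first, the differentiation rule $\frac{d}{dt}A_{t} = (\operatorname{Fl}_{\Upsilon}^{t})^{\ast}\mathcal{L}_{\Upsilon}A = \mathcal{L}_{\Upsilon}A_{t}$, which lets me commute $\mathcal{L}_{\Upsilon}$ with the time integrals by differentiating under the integral sign (legitimate since $t\mapsto A_{t}$ is smooth); and second, the periodicity $A_{0}=A_{2\pi}=A$ together with the group law $(\operatorname{Fl}_{\Upsilon}^{s})^{\ast}\circ(\operatorname{Fl}_{\Upsilon}^{t})^{\ast}=(\operatorname{Fl}_{\Upsilon}^{t+s})^{\ast}$.

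I would establish (\ref{Pr2}) first. Both of its quantities reduce to one and the same integral: commuting $\mathcal{L}_{\Upsilon}$ past the average gives $\mathcal{L}_{\Upsilon}\langle A\rangle=\frac{1}{2\pi}\int_{0}^{2\pi}\mathcal{L}_{\Upsilon}A_{t}\,dt$, while applying the definition of the average to $\mathcal{L}_{\Upsilon}A$ gives $\langle\mathcal{L}_{\Upsilon}A\rangle=\frac{1}{2\pi}\int_{0}^{2\pi}(\operatorname{Fl}_{\Upsilon}^{t})^{\ast}\mathcal{L}_{\Upsilon}A\,dt$, and in both integrands the integrand is $\frac{d}{dt}A_{t}$. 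The fundamental theorem of calculus then yields $\frac{1}{2\pi}(A_{2\pi}-A_{0})=0$ by periodicity. In particular $\langle A\rangle$ is $\mathbb{S}^{1}$-invariant, i.e.\ $(\operatorname{Fl}_{\Upsilon}^{t})^{\ast}\langle A\rangle=\langle A\rangle$, which I will use below.

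For (\ref{Pr1}) I would again differentiate under the integral and then integrate by parts, writing $\mathcal{L}_{\Upsilon}\mathcal{S}(A)=\frac{1}{2\pi}\int_{0}^{2\pi}(t-\pi)\frac{d}{dt}A_{t}\,dt$. The boundary term $[(t-\pi)A_{t}]_{0}^{2\pi}=\pi A_{2\pi}+\pi A_{0}=2\pi A$ by periodicity, while the remaining integral is $-2\pi\langle A\rangle$, giving exactly $A-\langle A\rangle$; the role of the shifted weight $(t-\pi)$ rather than $t$ is precisely to make the two boundary contributions add to $2\pi A$. The same weight drives (\ref{Pr3}): since $\int_{0}^{2\pi}(t-\pi)\,dt=0$, the invariance of $\langle A\rangle$ from (\ref{Pr2}) lets me pull it out, so $\mathcal{S}(\langle A\rangle)=\langle A\rangle\cdot\frac{1}{2\pi}\int_{0}^{2\pi}(t-\pi)\,dt=0$; and for $\langle\mathcal{S}(A)\rangle$ I would interchange the two integrations (Fubini), use the group law with a change of variable $u=t+s$ and periodicity to collapse the inner integral to $2\pi\langle A\rangle$, after which the outer factor $\int_{0}^{2\pi}(t-\pi)\,dt=0$ finishes it. The only care needed is the routine justification of differentiating and interchanging under the integral sign; the single substantive idea is that the weight $t-\pi$ has mean zero over the period, which is simultaneously what closes the integration by parts in (\ref{Pr1}) and what kills both halves of (\ref{Pr3}).
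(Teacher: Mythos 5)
Your proof is correct, and all three computations check out (the integration by parts in (\ref{Pr1}) gives boundary term $\pi A_{2\pi}+\pi A_{0}=2\pi A$ and remainder $-2\pi\langle A\rangle$; the Fubini/substitution argument for $\langle\mathcal{S}(A)\rangle$ and the mean-zero weight for $\mathcal{S}(\langle A\rangle)$ are exactly right). The paper itself gives no proof of this lemma, simply citing an earlier reference, but your argument is the standard one for these averaging identities, so there is nothing to flag.
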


For a given $A$, one can think of (\ref{Pr1}) as a homological equation
involving the Lie derivative along $\Upsilon$. Then, this equation admits a
solution with zero average of the form $\mathcal{S}(A)$. In this context, it
is natural to call $\mathcal{S}$ an integrating operator.

Remark also that operators (\ref{AV1}) and (\ref{AV2}) are well-defined on
the exterior algebras of multilvector fields and differential forms on $M$.
Together with the Lie derivative, these operators are natural with respect to
the exterior derivative $d$ on $M$, that is,
\[
\mathcal{S}(d\omega)=d(\mathcal{S}(\omega))\text{ and }d(\langle\omega\rangle)=\langle d\omega\rangle
\]
for any $k$-form $\omega$ on $M$. Moreover, we have the similar
properties with respect to the interior product. Recall that the interior
product of a 1-form $\alpha$ and $k$-vector field $A$ on $M$ is a
$(k-1)$-vector field $\mathbf{i}_{\alpha}A$ defined by $(\mathbf{i}_{\alpha
}A)(\alpha_{1},...,\alpha_{k-1})=A(\alpha,\alpha_{1},...,\alpha_{k-1}).$ If
$\alpha$  is an $\mathbb{S}^{1}$-invariant 1-form, then
\[
\langle\mathbf{i}_{\alpha}A\rangle=\mathbf{i}_{\alpha}\langle A\rangle\text{ and }\mathcal{S}%
(\mathbf{i}_{\alpha}A)=\mathbf{i}_{\alpha}\mathcal{S}(A)
\]
for an arbitrary $k$-vector field $A$.

\section{Setting of the Problem}

Consider the phase space $\mathbb{R}_{y,x}^{2r}\times\mathbb{R}_{p,q}^{2k}$
endowed with the $\varepsilon$-dependent Poisson bracket
\begin{equation}
\{,\}=\{,\}_{0}+\varepsilon\{,\}_{1}\text{,} \label{PB}%
\end{equation}
where $\{,\}_{0}$ and $\{,\}_{1}$ denote the natural lifts of the canonical
Poisson brackets on the factors $\mathbb{R}_{y,x}^{2r}$ and $\mathbb{R}%
_{p,q}^{2k}$, respectively. Suppose we start with slow-fast Hamiltonian
system (\ref{SF1}), (\ref{SF2}) associated to a smooth function $H=H(p,q,y,x)$%
. As was mentioned, this system is Hamiltonian relative to Poisson bracket
(\ref{PB}) and the function $H$. The corresponding Hamiltonian vector field
$X_{H}$ is represented as follows
\[
X_{H}=X_{H}^{(0)}+\varepsilon X_{H}^{(1)},
\]
where the unperturbed vector field $X_{H}^{(0)}$ and the perturbation vector
field $X_{H}^{(1)}$ are Hamiltonian relative to $H$ and the Poisson brackets
$\{,\}_{0}$ and $\{,\}_{1}$, respectively,
\begin{equation}
X_{H}^{(0)}=-\frac{\partial H}{\partial x}\cdot\frac{\partial}{\partial
y}+\frac{\partial H}{\partial y}\cdot\frac{\partial}{\partial x}, \label{PH1}%
\end{equation}%
\begin{equation}
X_{H}^{(1)}=-\frac{\partial H}{\partial q}\cdot\frac{\partial}{\partial
p}+\frac{\partial H}{\partial p}\cdot\frac{\partial}{\partial q}. \label{PH2}%
\end{equation}
We assume that the unperturbed system admits an invariant open domain
$M\subseteq$ $\mathbb{R}_{y,x}^{2r}\times\mathbb{R}_{p,q}^{2k}$ such that the
flow $\operatorname{Fl}_{X_{H}^{(0)}}^{t}$ of $X_{H}^{(0)}$ is periodic on $M$
with frequency function $\omega\in C^{\infty}(M),$ $\omega \rangle  0$. This means
that $\operatorname{Fl}_{X_{H}^{(0)}}^{t+T(m)}(m)=\operatorname{Fl}%
_{X_{H}^{(0)}}^{t}(m)$ for all $t\in\mathbb{R}$ and $m\in M$. Here
$T=\frac{2\pi}{\omega}$ is the period function. Then, the flow of the vector
field
\begin{equation}
\Upsilon:=\frac{1}{\omega}X_{H}^{(0)} \label{ING1}%
\end{equation}
is $2\pi$-periodic and hence $\Upsilon$ is an infinitesimal generator of the
$\mathbb{S}^{1}$-action on $M$.

It is clear that the frequency function $\omega$ and the Hamiltonian $H$ are
$\mathbb{S}^{1}$-invariant. Moreover, by the period-energy relation
\cite{BaSn-92,Gor-69} for periodic Hamiltonian flows, we have the equality
\begin{equation}
(d_{y}H+d_{x}H)\wedge(d_{y}\omega+d_{x}\omega)=0, \label{PE}%
\end{equation}
where $d_{y}$ and $d_{x}$ denote the partial exterior derivatives on $M$
with respect to the fast variables $y$ and $x$, respectively. Relation
(\ref{PE}) means that, for a fixed $(p,q)$, the frequency function $\omega$ is
constant along the intersection of a level set of $H$ and the slice
$\mathbb{R}_{y,x}^{2r}\times\{(p,q)\}$. It is also easy to see from relation
(\ref{PE}) that the $\mathbb{S}^{1}$-action is canonical with respect to the
bracket $\{,\}_{0}$. On the other hand, as we will show below (see Lemma \ref{lemm7}),
the $\mathbb{S}^{1}$-action does not preserve the slow Poisson bracket
$\{,\}_{1}$, in general. Therefore, the perturbation vector field
$X_{H}^{(1)}$ is not necessarily $\mathbb{S}^{1}$-invariant. This fact rises
the normalization question: in the class of near-identity mappings on $M$,
bring the Hamiltonian vector field $X_{H}$ to an $\mathbb{S}^{1}$-invariant
normal form of desired order in $\varepsilon$.

\section{The Hannay-Berry Connection Associated to the $\mathbb{S}^1$-Action}

To formulate our main results, we need some preliminary facts related to the
averaging procedure on phase spaces with $S^{1}$-symmetry, \cite{MisVor-12,GoKnMa,MaMoRa-90,Mon-88,Vor-11,VorMis-12}.

Throughout this section, we will use operators the $\langle$ $ \rangle  $ and $\mathcal{S}$ in
(\ref{AV1}) and (\ref{AV2}) which are associated to the $\mathbb{S}^{1}%
$-action with infinitesimal generator (\ref{ING1}).

\begin{lemma}\label{lem1}
The $\mathbb{S}^{1}$-action associated to the periodic flow of $X_{H}^{(0)}$
is Hamiltonian relative to the fast Poisson bracket $\{,\}_{0}$,
\begin{equation}
\Upsilon=X_{J}^{(0)}, \label{AH1}%
\end{equation}
where the momentum map $J\in C^{\infty}(M)$ is given by
\begin{equation}
J=\frac{1}{\omega}\mathbf{i}_{X_{H}^{(0)}} \langle ydx \rangle  . \label{AH2}%
\end{equation}
Moreover,
\begin{equation}
 \langle \frac{\partial J}{\partial p^{i}} \rangle  = \langle \frac{\partial J}{\partial q^{i}} \rangle  =0
\label{AH3}%
\end{equation}
for $i=1,...,k$.
\end{lemma}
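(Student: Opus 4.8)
The plan is to produce $J$ as the contraction of $\Upsilon$ with an $\mathbb{S}^{1}$-invariant primitive of the fast symplectic form, and then to read off both assertions from the fast/slow splitting of $dJ$. Write $\theta=y\,dx$ for the Liouville $1$-form of the fast factor, so that the leafwise fast symplectic form is $\omega_{0}=d\theta$ and, by (\ref{PH1}), $\mathbf{i}_{X_{H}^{(0)}}\omega_{0}=-(d_{y}H+d_{x}H)$. First I would average $\theta$ to obtain the invariant $1$-form $\Theta:=\langle y\,dx\rangle$ and set $J:=\mathbf{i}_{\Upsilon}\Theta$; since $\Upsilon=\frac{1}{\omega}X_{H}^{(0)}$ this is precisely (\ref{AH2}).

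Next I would compute $dJ$ with Cartan's magic formula. As $\Theta$ is $\mathbb{S}^{1}$-invariant, $\mathcal{L}_{\Upsilon}\Theta=0$, so $dJ=d\,\mathbf{i}_{\Upsilon}\Theta=-\mathbf{i}_{\Upsilon}d\Theta$. Using that $\langle\,\cdot\,\rangle$ commutes with $d$ and that $\mathbf{i}_{\Upsilon}$ commutes with $\langle\,\cdot\,\rangle$ for the invariant field $\Upsilon$ (by an argument analogous to the naturality recorded after (\ref{Pr3})), one gets $d\Theta=\langle\omega_{0}\rangle$ and hence $dJ=-\langle\mathbf{i}_{\Upsilon}\omega_{0}\rangle=\big\langle\frac{1}{\omega}(d_{y}H+d_{x}H)\big\rangle$. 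The decisive move is then to split off the invariant part of the integrand: writing $d_{y}H+d_{x}H=dH-d_{p,q}H$, where $d_{p,q}H$ is the slow part of $dH$, I note that $\frac{1}{\omega}dH$ is $\mathbb{S}^{1}$-invariant (both $H$ and $\omega$ are), so it passes through the average untouched, while $\langle\frac{1}{\omega}d_{p,q}H\rangle$ is a purely slow $1$-form, since the flow of $\Upsilon$ fixes $p,q$ and therefore $dp^{i},dq^{i}$. This gives the clean identity
\[
dJ=\frac{1}{\omega}\,dH-\Big\langle\tfrac{1}{\omega}\,d_{p,q}H\Big\rangle .
\]

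Comparing fast components yields $d_{y}J+d_{x}J=\frac{1}{\omega}(d_{y}H+d_{x}H)$, i.e. $X_{J}^{(0)}=\frac{1}{\omega}X_{H}^{(0)}=\Upsilon$, which is (\ref{AH1}). Comparing slow components yields
\[
\frac{\partial J}{\partial p^{i}}=\frac{1}{\omega}\Big(\frac{\partial H}{\partial p^{i}}-\big\langle\tfrac{\partial H}{\partial p^{i}}\big\rangle\Big),\qquad \frac{\partial J}{\partial q^{i}}=\frac{1}{\omega}\Big(\frac{\partial H}{\partial q^{i}}-\big\langle\tfrac{\partial H}{\partial q^{i}}\big\rangle\Big).
\]
Averaging these, pulling the invariant factor $\frac{1}{\omega}$ out of $\langle\,\cdot\,\rangle$, and using idempotency $\langle\langle\,\cdot\,\rangle\rangle=\langle\,\cdot\,\rangle$ (a consequence of (\ref{Pr2})), I obtain $\langle\partial J/\partial p^{i}\rangle=\langle\partial J/\partial q^{i}\rangle=0$, which is (\ref{AH3}).

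The step I expect to be the real obstacle is the slow part, (\ref{AH3}): one cannot simply differentiate the average $J=\langle\frac{1}{\omega}\,y\cdot\partial_{y}H\rangle$ under the integral sign in $p^{i}$ or $q^{i}$, because $\partial/\partial p^{i}$ and $\partial/\partial q^{i}$ are not $\mathbb{S}^{1}$-invariant vector fields and hence do not commute with $\langle\,\cdot\,\rangle$. Isolating the invariant summand $\frac{1}{\omega}dH$, so that the non-invariant remainder lives entirely on the slow covariables $dp^{i},dq^{i}$ which the flow leaves fixed, is exactly what bypasses this difficulty and makes the two averages collapse.
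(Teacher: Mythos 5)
Your argument is correct and follows essentially the same route as the paper's own proof: both take $J=\mathbf{i}_{\Upsilon}\langle y\,dx\rangle$, apply Cartan's formula together with $\mathcal{L}_{\Upsilon}\langle y\,dx\rangle=0$ to get $dJ=-\mathbf{i}_{\Upsilon}\langle d(y\,dx)\rangle=\frac{1}{\omega}\bigl(dH-\langle d_{p}H\rangle-\langle d_{q}H\rangle\bigr)$, and then read off (\ref{AH1}) from the fast components and (\ref{AH3}) from the slow components using the invariance of $\omega$, of $dp^{i},dq^{i}$, and the idempotency of the average. The only caveat is notational: your $\Theta=\langle y\,dx\rangle$ collides with the paper's $\Theta$ defined in (\ref{Hor2}), so you should choose a different symbol there.
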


\begin{proof}
Let $\eta=ydx$. Then, $J=\mathbf{i}_{\Upsilon} \langle \eta \rangle  $ and
\[
\mathbf{i}_{X_{H}^{(0)}}d\eta=-dH+d_{p}H+d_{q}H.
\]
Using property (\ref{Pr2}), the $\mathbb{S}^{1}$-invariance of $dH$ and
Cartan's formula, we get
\begin{align*}
dJ  &  =d(\mathbf{i}_{\Upsilon} \langle \eta \rangle  )=\mathcal{L}_{\Upsilon} \langle \eta
 \rangle  -\mathbf{i}_{\Upsilon} \langle d\eta \rangle  \\
&  =-\frac{1}{\omega} \langle \mathbf{i}_{X_{H}^{(0)}}d\eta \rangle =\frac{1}{\omega}\left(  dH- \langle d_{p}H \rangle  - \langle d_{q}H \rangle  \right).
\end{align*}
From here, taking into account that the 1-forms $dp^{i},dq^{i}$ are
$\mathbb{S}^{1}$-invariant, we deduce the relations%
\begin{equation}
d_{y}J=\frac{1}{\omega}d_{y}H, \ \ d_{x}J=\frac{1}{\omega}d_{x}H,
\label{Rel1}%
\end{equation}%
\begin{equation}
d_{p}J=\frac{1}{\omega}\left(  d_{p}H- \langle d_{p}H \rangle  \right)  , \ \ 
d_{q}J=\frac{1}{\omega}\left(  d_{q}H- \langle d_{q}H \rangle  \right)  \label{Rel2}%
\end{equation}
which imply (\ref{AH1}) and (\ref{AH3}).
\end{proof}

As a consequence of (\ref{PE}) (\ref{Rel1}) and (\ref{Rel2}), we get the
following fact.

\begin{corollary}
The differentials $dH$ and $dJ$ are linear independent on $M$ if and only if%
\begin{equation}
 \langle d_{p}H \rangle  + \langle d_{q}H \rangle  \neq0. \label{Cr}%
\end{equation}

\end{corollary}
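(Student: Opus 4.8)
The plan is to turn the componentwise formulas (\ref{Rel1}) and (\ref{Rel2}) into a single relation between the full differentials $dH$ and $dJ$. Adding the four equations and abbreviating $\Theta:=\langle d_{p}H\rangle+\langle d_{q}H\rangle$, I would obtain
\[
\omega\,dJ=dH-\Theta,
\]
so that $\Theta$ is exactly the obstruction to $dH$ and $\omega\,dJ$ coinciding. Since $\omega>0$, this gives $dJ=\frac{1}{\omega}(dH-\Theta)$, and wedging with $dH$ annihilates the $dH\wedge dH$ term, yielding
\[
dH\wedge dJ=-\tfrac{1}{\omega}\,dH\wedge\Theta.
\]
Linear independence of $dH$ and $dJ$ is the condition $dH\wedge dJ\neq0$, so the problem reduces to showing $dH\wedge\Theta\neq0\iff\Theta\neq0$.

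One implication is immediate: if $\Theta=0$ then $dH\wedge dJ=0$ and $dH,dJ$ are dependent, so contrapositively independence forces $\Theta\neq0$. For the converse I would exploit the fast/slow bigrading of $T^{\ast}M$. By construction $\Theta$ has only slow components, since $dp^{i},dq^{i}$ are $\mathbb{S}^{1}$-invariant and hence $\langle d_{p}H\rangle+\langle d_{q}H\rangle$ is again a combination of these forms. Decomposing $dH=(d_{y}H+d_{x}H)+(d_{p}H+d_{q}H)$ into its fast and slow parts, the product $dH\wedge\Theta$ splits into a mixed fast–slow $2$-form $(d_{y}H+d_{x}H)\wedge\Theta$ and a purely slow $2$-form $(d_{p}H+d_{q}H)\wedge\Theta$ lying in complementary summands of $\Lambda^{2}T^{\ast}M$; they cannot cancel, so $dH\wedge\Theta=0$ forces in particular $(d_{y}H+d_{x}H)\wedge\Theta=0$. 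A wedge of a purely fast $1$-form with a purely slow one vanishes only when a factor is zero, and $d_{y}H+d_{x}H\neq0$ because, by (\ref{PH1}), its vanishing is equivalent to $X_{H}^{(0)}=0$, which is excluded wherever the flow is genuinely periodic with $\omega>0$. Hence $\Theta=0$, completing the contrapositive.

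The role of the period-energy relation (\ref{PE}) is only to guarantee, via Lemma \ref{lem1}, that $J$ is well defined and $\mathbb{S}^{1}$-invariant and that (\ref{Rel1}), (\ref{Rel2}) hold; I do not expect to need it again beyond these structural facts. The delicate point, which I regard as the main obstacle, is the fixed-point set of the $\mathbb{S}^{1}$-action, where $X_{H}^{(0)}$—and therefore the fast part $d_{y}H+d_{x}H$ of $dH$—vanishes. There the step "$d_{y}H+d_{x}H\neq0$" fails, and in fact one checks that $\omega\,dJ=dH-\Theta$ forces $dJ=0$ while $\Theta=d_{p}H+d_{q}H$ may survive, so the pointwise equivalence can genuinely break down. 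I would therefore phrase the result on the open locus where $\Upsilon\neq0$, equivalently read "linearly independent on $M$" as functional independence $dH\wedge dJ\not\equiv0$; this locus is dense because $X_{H}^{(0)}$ cannot vanish on an open set without trivializing the periodic flow.
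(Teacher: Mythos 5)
Your argument is essentially the one the paper intends: the corollary is asserted there as an immediate consequence of (\ref{Rel1})--(\ref{Rel2}), i.e.\ of the relation $\omega\,dJ=dH-\langle d_{p}H\rangle-\langle d_{q}H\rangle$ established inside the proof of Lemma \ref{lem1}, and your wedge computation together with the fast/slow bigrading of $T^{\ast}M$ is just the explicit unpacking of that one-liner (note only that your abbreviation $\Theta$ for $\langle d_{p}H\rangle+\langle d_{q}H\rangle$ collides with the paper's 1-form $\Theta$ from (\ref{Hor2})). Your caveat about the fixed-point set of the $\mathbb{S}^{1}$-action is genuine and correctly analyzed: at zeros of $X_{H}^{(0)}$ one gets $dJ=0$ while $\langle d_{p}H\rangle+\langle d_{q}H\rangle=d_{p}H+d_{q}H$ may be nonzero (e.g.\ $H=h(p,q)+\omega(p,q)\tfrac{y^{2}+x^{2}}{2}$ at $y=x=0$), so the pointwise equivalence does fail there; the paper, which expressly allows non-free actions, glosses over this, and your restriction to the locus $\Upsilon\neq0$ (or the reading of independence as $dH\wedge dJ\not\equiv0$) is the right repair.
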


\begin{remark}
If the $\mathbb{S}^{1}$-action is free on $M$, then formula (\ref{AH2}) gives
the standard action along the periodic orbits of $X_{H}^{(0)}$ \cite{Arno-63}.
\end{remark}

Now, using the momentum map $J$ and operator (\ref{AV2}), we define the 1-form
$\Theta=\Theta_{i}^{p}dp^{i}+\Theta_{i}^{q}dq^{i}$ on $M$ with coefficients
\begin{equation}
\Theta_{i}^{p}:=\mathcal{S}(\frac{\partial J}{\partial p^{i}}),\text{
\ }\Theta_{i}^{q}:=\mathcal{S}(\frac{\partial J}{\partial q^{i}}),
\label{Hor2}%
\end{equation}
It follows from property (\ref{Pr3}) that
\begin{equation}
 \langle \Theta_{i}^{p} \rangle  = \langle \Theta_{i}^{q} \rangle  =0 \label{AZ}%
\end{equation}
for $i=1,...,k$. Here and throughout the remainder of the text, the\ summation
on repeated indices will be understood.

Consider the pre-symplectic 2-form $dy\wedge dx$ on $\mathbb{R}^{2r}%
\times\mathbb{R}^{2k}$ associated to the fast Poisson bracket $\{,\}_{1}$. The
following lemma shows that the differential of the 1-form $\Theta$ measures
the deviation of $dy\wedge dx$ from the property of being invariant with
respect to the $\mathbb{S}^{1}$-action.

\begin{lemma}
The average $\mathbb{S}^{1}$-average of the 2-form $dy\wedge dx$ has
the following representation on $M$:
\begin{equation}
 \langle dy\wedge dx \rangle  =dy\wedge dx-d\Theta. \label{FS1}%
\end{equation}
\end{lemma}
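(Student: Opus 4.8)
The plan is to exploit the fact that $\Omega:=dy\wedge dx$ is closed and to run the homological identity (\ref{Pr1}) backwards. Since I want to express the defect $\Omega-\langle\Omega\rangle$ as an exact form, I first apply (\ref{Pr1}) to $A=\Omega$, obtaining $\mathcal{L}_{\Upsilon}\mathcal{S}(\Omega)=\Omega-\langle\Omega\rangle$. Next I rewrite the left-hand side with Cartan's formula, $\mathcal{L}_{\Upsilon}\mathcal{S}(\Omega)=d\,\mathbf{i}_{\Upsilon}\mathcal{S}(\Omega)+\mathbf{i}_{\Upsilon}\,d\mathcal{S}(\Omega)$. Because $\mathcal{S}$ is natural with respect to $d$ and $\Omega$ is closed, the second term vanishes: $d\mathcal{S}(\Omega)=\mathcal{S}(d\Omega)=0$. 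This already gives the clean identity
\[
\Omega-\langle\Omega\rangle=d\bigl(\mathbf{i}_{\Upsilon}\mathcal{S}(\Omega)\bigr),
\]
so the whole problem reduces to identifying the potential $\mathbf{i}_{\Upsilon}\mathcal{S}(\Omega)$ with $\Theta$ up to an exact correction.

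To evaluate the potential I would commute $\mathbf{i}_{\Upsilon}$ past $\mathcal{S}$. Since $\Upsilon$ is invariant under its own flow, $(\operatorname{Fl}_{\Upsilon}^{t})_{*}\Upsilon=\Upsilon$, one has $\mathbf{i}_{\Upsilon}(\operatorname{Fl}_{\Upsilon}^{t})^{*}=(\operatorname{Fl}_{\Upsilon}^{t})^{*}\mathbf{i}_{\Upsilon}$, and integrating against the weight $(t-\pi)$ yields $\mathbf{i}_{\Upsilon}\mathcal{S}(\Omega)=\mathcal{S}(\mathbf{i}_{\Upsilon}\Omega)$. A direct contraction, using $X_{H}^{(0)}$ from (\ref{PH1}), gives $\mathbf{i}_{X_{H}^{(0)}}(dy\wedge dx)=-(d_{y}H+d_{x}H)$, hence $\mathbf{i}_{\Upsilon}\Omega=-\tfrac{1}{\omega}(d_{y}H+d_{x}H)$. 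Invoking the relations (\ref{Rel1}) established in Lemma~\ref{lem1} converts this into $\mathbf{i}_{\Upsilon}\Omega=-(d_{y}J+d_{x}J)$, i.e.\ the potential is expressed entirely through the momentum map $J$.

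It remains to split $dJ=d_{y}J+d_{x}J+d_{p}J+d_{q}J$ and recognize $\Theta$. Since $dp^{i},dq^{i}$ are $\mathbb{S}^{1}$-invariant, the definition (\ref{Hor2}) of the coefficients $\Theta_{i}^{p},\Theta_{i}^{q}$ gives $\Theta=\mathcal{S}(d_{p}J+d_{q}J)$. Therefore
\[
\mathbf{i}_{\Upsilon}\mathcal{S}(\Omega)=-\mathcal{S}(d_{y}J+d_{x}J)=-\mathcal{S}(dJ)+\mathcal{S}(d_{p}J+d_{q}J)=-d\,\mathcal{S}(J)+\Theta,
\]
where I used naturality of $\mathcal{S}$ with respect to $d$ to write $\mathcal{S}(dJ)=d\,\mathcal{S}(J)$. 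Applying $d$ kills the exact correction $d\,\mathcal{S}(J)$, so $d(\mathbf{i}_{\Upsilon}\mathcal{S}(\Omega))=d\Theta$, and combining with the boxed identity above yields $\langle dy\wedge dx\rangle=dy\wedge dx-d\Theta$.

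I expect the main obstacle to be the two commutation facts rather than any single computation: first, that $\mathbf{i}_{\Upsilon}$ commutes with $\mathcal{S}$ (the excerpt states this only for contraction with an invariant \emph{1-form}, so one must rejustify it for the generating vector field via the pullback argument), and second, recognizing that the natural potential $\mathbf{i}_{\Upsilon}\mathcal{S}(\Omega)$ equals $\Theta$ only modulo the exact term $d\,\mathcal{S}(J)$. This is precisely why the lemma is formulated at the level of $d\Theta$ and not of $\Theta$ itself; keeping track of that correction is the one place where care is needed.
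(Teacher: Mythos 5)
Your proof is correct and follows essentially the same route as the paper's: the identity $\sigma-\langle\sigma\rangle=d\bigl(\mathbf{i}_{\Upsilon}\mathcal{S}(\sigma)\bigr)$ from (\ref{Pr1}) plus Cartan's formula, then the evaluation $\mathbf{i}_{\Upsilon}\sigma=-(d_{y}J+d_{x}J)=-dJ+(d_{p}J+d_{q}J)$ and the identification $\mathcal{S}(d_{p}J+d_{q}J)=\Theta$. Your two points of extra care are both sound and slightly cleaner than the paper's write-up: the commutation $\mathbf{i}_{\Upsilon}\mathcal{S}=\mathcal{S}\mathbf{i}_{\Upsilon}$ does need the pullback argument you give (the stated naturality is for invariant $1$-forms against multivectors), and the leftover term is indeed exact --- in fact $\mathcal{S}(dJ)=d\mathcal{S}(J)=0$ since $J$ is $\mathbb{S}^{1}$-invariant, whereas the paper's intermediate line ``$-\mathcal{S}(dJ)=-dJ$'' is a harmless slip that, like your $-d\mathcal{S}(J)$, is killed by the outer $d$ in any case.
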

\begin{proof}
First, we observe that the closed 2-form $\sigma=dy\wedge dx$ satisfies the
relation
\begin{equation}
\sigma= \langle \sigma \rangle  +\text{ }d\circ\mathbf{i}_{\Upsilon}\mathcal{S}(\sigma)
\label{FS2}%
\end{equation}
Indeed, property (\ref{Pr1}) together with Cartan's formula yields
\[
d\circ\mathbf{i}_{\Upsilon}\mathcal{S}(\sigma)=\mathcal{L}_{\Upsilon
}(\mathcal{S}(\sigma))=\sigma- \langle \sigma \rangle
\]
By (\ref{AH1}) we have
\[
\mathbf{i}_{\Upsilon}\sigma=-(d_{y}J+d_{x}J)=-dJ+(d_{p}J+d_{q}J)
\]
and consequently,
\[
\mathbf{i}_{\Upsilon}\mathcal{S}(\sigma)=\mathcal{S}(\mathbf{i}_{\Upsilon
}\sigma)=-\mathcal{S}(dJ)+\mathcal{S}(d_{p}J+d_{q}J)=-dJ+\Theta.
\]
Putting this equality into (\ref{FS2}), we get (\ref{FS1}).
\end{proof}

Introduce now the following vector fields on $M$:
\begin{equation}
\operatorname{hor}_{i}^{p}:=\frac{\partial}{\partial p^{i}}+X_{\Theta_{i}^{p}%
}^{(0)},\text{ \ \ }\operatorname{hor}_{i}^{q}:=\frac{\partial}{\partial
q^{i}}+X_{\Theta_{i}^{q}}^{(0)}. \label{Hor1}%
\end{equation}

\begin{lemma}\label{lem2}
The following identities hold
\begin{equation}
\mathcal{L}_{\operatorname{hor}_{i}^{p}}J=\mathcal{L}_{\operatorname{hor}%
_{i}^{q}}J=0, \label{Hor3}%
\end{equation}%
\begin{equation}
[\operatorname{hor}_{i}^{p},\Upsilon]=[\operatorname{hor}_{i}%
^{q},\Upsilon]=0, \label{Hor4}%
\end{equation}
for all $i=1,...,k$.
\end{lemma}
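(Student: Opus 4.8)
The plan is to reduce both identities to three facts already established: the identification $\Upsilon = X_{J}^{(0)}$ from Lemma~\ref{lem1}, the homological identity (\ref{Pr1}), and the vanishing averages (\ref{AH3}). The guiding observation is that, because $\Upsilon$ is itself the $\{,\}_{0}$-Hamiltonian vector field of $J$, the convention $X_{f}^{(0)}(g)=\{g,f\}_{0}$ implied by (\ref{PH1}) lets me trade derivatives along $\Upsilon$ for Poisson brackets with $J$ and back: for any function $f$ one has $X_{f}^{(0)}(J)=\{J,f\}_{0}=-\{f,J\}_{0}=-\mathcal{L}_{\Upsilon}f$. Applied to $f=\Theta_{i}^{p}=\mathcal{S}(\partial J/\partial p^{i})$ (see (\ref{Hor2})), identity (\ref{Pr1}) then gives $\mathcal{L}_{\Upsilon}\Theta_{i}^{p}=\partial J/\partial p^{i}-\langle\partial J/\partial p^{i}\rangle$, and the average term dies by (\ref{AH3}). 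This single computation is the engine behind both claims.

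For (\ref{Hor3}) I would expand $\mathcal{L}_{\operatorname{hor}_{i}^{p}}J=\partial J/\partial p^{i}+X_{\Theta_{i}^{p}}^{(0)}(J)$. By the boxed observation the second term equals $-\mathcal{L}_{\Upsilon}\Theta_{i}^{p}=-\partial J/\partial p^{i}$, so the two contributions cancel and $\mathcal{L}_{\operatorname{hor}_{i}^{p}}J=0$. The argument for $\operatorname{hor}_{i}^{q}$ is word-for-word the same with $p^{i}$ replaced by $q^{i}$.

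For (\ref{Hor4}) I would split $[\operatorname{hor}_{i}^{p},\Upsilon]=[\partial/\partial p^{i},X_{J}^{(0)}]+[X_{\Theta_{i}^{p}}^{(0)},X_{J}^{(0)}]$. The first bracket is computed directly from the explicit form (\ref{PH1}): since $\partial/\partial p^{i}$ commutes with the coordinate fields $\partial/\partial y,\partial/\partial x$ and merely differentiates the coefficients $-\partial J/\partial x$ and $\partial J/\partial y$ of $X_{J}^{(0)}$ with respect to $p^{i}$, one obtains $[\partial/\partial p^{i},X_{J}^{(0)}]=X_{\partial J/\partial p^{i}}^{(0)}$. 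The second bracket is handled by the Lie-algebra anti-homomorphism $[X_{f}^{(0)},X_{g}^{(0)}]=-X_{\{f,g\}_{0}}^{(0)}$, which turns it into $-X_{\{\Theta_{i}^{p},J\}_{0}}^{(0)}$; but $\{\Theta_{i}^{p},J\}_{0}=\mathcal{L}_{\Upsilon}\Theta_{i}^{p}=\partial J/\partial p^{i}$ by the same (\ref{Pr1})+(\ref{AH3}) step, so it equals $-X_{\partial J/\partial p^{i}}^{(0)}$. The two brackets cancel, giving $[\operatorname{hor}_{i}^{p},\Upsilon]=0$, and the $q$-case is identical.

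I expect the only genuine obstacle to be careful sign and convention bookkeeping: one must fix $X_{f}^{(0)}(g)=\{g,f\}_{0}$ from (\ref{PH1}), carry the matching minus sign in the anti-homomorphism $[X_{f}^{(0)},X_{g}^{(0)}]=-X_{\{f,g\}_{0}}^{(0)}$, and correctly evaluate the non-Hamiltonian commutator $[\partial/\partial p^{i},X_{J}^{(0)}]$. This last step is the one place where the concrete canonical structure of $\{,\}_{0}$, rather than its abstract properties, is used; everything else is a direct application of Lemma~\ref{lem1}, (\ref{Pr1}), and (\ref{AH3}).
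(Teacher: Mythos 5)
Your proof is correct and follows essentially the same route as the paper: both reduce everything to $\Upsilon=X_{J}^{(0)}$ from Lemma~\ref{lem1}, the homological identity (\ref{Pr1}) combined with (\ref{AH3}) to get $\mathcal{L}_{\Upsilon}\Theta_{i}^{p}=\partial J/\partial p^{i}$, and the antisymmetry of $\{,\}_{0}$. The only cosmetic difference is in (\ref{Hor4}), where the paper applies the identity $[Y,X_{J}^{(0)}]=X_{\mathcal{L}_{Y}J}^{(0)}$ for the Poisson vector field $Y=\operatorname{hor}_{i}^{p}$ and then invokes (\ref{Hor3}), whereas you verify the same cancellation term by term.
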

\begin{proof}
Definition (\ref{Hor2}) and property (\ref{Pr1}) imply
that
\[
\mathcal{L}_{\Upsilon}\Theta_{i}^{p}=\frac{\partial J}{\partial p^{i}},\text{
\ }\mathcal{L}_{\Upsilon}\Theta_{i}^{q}=\frac{\partial J}{\partial q^{i}}.
\]
Moreover, property (\ref{AH1}) shows that $\mathcal{L}_{X_{F}^{(0)}%
}J=\{F,J\}_{0}=-\mathcal{L}_{\Upsilon}F$ for any $F\in C^{\infty}(M)$. Using
above relations and (\ref{Hor1}), we obtain (\ref{Hor3}). Next, if $Y$ is an
Poisson vector field $Y$ of the bracket $\{,\}_{0}$, then
\[
[ Y,X_{J}^{(0)}]=X_{\mathcal{L}_{Y)}J}^{(0)}.
\]
Combining this identity for Poisson vector fields $\operatorname{hor}_{i}^{p}$
and $\operatorname{hor}_{i}^{q}$ with equalities (\ref{Hor3}), we justify
(\ref{Hor4}).
\end{proof}

Therefore, it follows from (\ref{Hor3}) and (\ref{Hor4}) that vector fields in
(\ref{Hor1}) are $\mathbb{S}^{1}$-invariant and have the momentum map $J$ as a
common first integral. The following consequence of Lemma \ref{lem2} gives us an
alternative definition of $\operatorname{hor}_{i}^{p}$ and $\operatorname{hor}%
_{i}^{q}$.

\begin{corollary}
The vector fields in (\ref{Hor1}) coincide with the $\mathbb{S}^{1}$-averages
of the coordinate vector fields associated to the slow variables,
\begin{equation}
\operatorname{hor}_{i}^{p}= \langle \frac{\partial}{\partial p^{i}} \rangle  ,\text{
\ \ }\operatorname{hor}_{i}^{q}= \langle \frac{\partial}{\partial q^{i}} \rangle   \label{HB}%
\end{equation}
for all $i=1,...,k$.
\end{corollary}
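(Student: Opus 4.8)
The plan is to combine the $\mathbb{S}^{1}$-invariance of the vector fields $\operatorname{hor}_{i}^{p},\operatorname{hor}_{i}^{q}$ established in Lemma \ref{lem2} with the fact that the $\mathbb{S}^{1}$-action is canonical relative to the fast bracket $\{,\}_{0}$. First, since $[\operatorname{hor}_{i}^{p},\Upsilon]=0$ by (\ref{Hor4}), the field $\operatorname{hor}_{i}^{p}$ is $\mathbb{S}^{1}$-invariant, so that $\operatorname{hor}_{i}^{p}=\langle\operatorname{hor}_{i}^{p}\rangle$. By $\mathbb{R}$-linearity of the averaging operator and definition (\ref{Hor1}) one has $\langle\operatorname{hor}_{i}^{p}\rangle=\langle\partial/\partial p^{i}\rangle+\langle X_{\Theta_{i}^{p}}^{(0)}\rangle$, and symmetrically for the $q$-index. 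Thus the whole statement reduces to showing that the averaged Hamiltonian fields vanish: $\langle X_{\Theta_{i}^{p}}^{(0)}\rangle=0$ and $\langle X_{\Theta_{i}^{q}}^{(0)}\rangle=0$.

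The key step is that averaging commutes with passing to a Hamiltonian vector field of $\{,\}_{0}$. Because the flow $\operatorname{Fl}_{\Upsilon}^{t}$ is canonical with respect to $\{,\}_{0}$ (as noted just after relation (\ref{PE})), each pullback carries a Hamiltonian field into a Hamiltonian field, $(\operatorname{Fl}_{\Upsilon}^{t})^{\ast}X_{f}^{(0)}=X_{(\operatorname{Fl}_{\Upsilon}^{t})^{\ast}f}^{(0)}$ for every $f\in C^{\infty}(M)$. Integrating this identity against the constant weight $1/2\pi$ over $[0,2\pi]$ and using that $f\mapsto X_{f}^{(0)}$ is $\mathbb{R}$-linear, I would obtain the interchange rule $\langle X_{f}^{(0)}\rangle=X_{\langle f\rangle}^{(0)}$.

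Applying this with $f=\Theta_{i}^{p}$ and invoking the vanishing of the average (\ref{AZ}), namely $\langle\Theta_{i}^{p}\rangle=0$, gives $\langle X_{\Theta_{i}^{p}}^{(0)}\rangle=X_{\langle\Theta_{i}^{p}\rangle}^{(0)}=0$, and likewise for the $q$-index. Substituting back into the first paragraph yields $\operatorname{hor}_{i}^{p}=\langle\partial/\partial p^{i}\rangle$ and $\operatorname{hor}_{i}^{q}=\langle\partial/\partial q^{i}\rangle$, which is exactly (\ref{HB}).

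I expect the main obstacle to be the clean justification of the interchange $\langle X_{f}^{(0)}\rangle=X_{\langle f\rangle}^{(0)}$: it is precisely here that the hypothesis that the $\mathbb{S}^{1}$-action is canonical for $\{,\}_{0}$ is essential, and one must check that the pullback of a Hamiltonian field stays Hamiltonian with Hamiltonian $(\operatorname{Fl}_{\Upsilon}^{t})^{\ast}f$ without picking up extra terms. Should one prefer to avoid this naturality statement, an alternative route is to use the homological identity (\ref{Pr1}) to write $\langle\partial/\partial p^{i}\rangle-\partial/\partial p^{i}=[\mathcal{S}(\partial/\partial p^{i}),\Upsilon]$, and then identify the right-hand side with $X_{\Theta_{i}^{p}}^{(0)}$ through the Poisson-field commutator formula $[Y,X_{J}^{(0)}]=X_{\mathcal{L}_{Y}J}^{(0)}$ used in the proof of Lemma \ref{lem2}; this requires verifying that $\mathcal{S}(\partial/\partial p^{i})$ is again a Poisson field of $\{,\}_{0}$ and that $\mathcal{L}_{\mathcal{S}(\partial/\partial p^{i})}J=\Theta_{i}^{p}$, the latter following from the naturality of $\mathcal{S}$ under interior product with the invariant $1$-form $dJ$.
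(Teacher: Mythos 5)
Your proof is correct and follows essentially the same route as the paper: the paper likewise notes that the $\mathbb{S}^{1}$-invariance of $\operatorname{hor}_{i}^{p},\operatorname{hor}_{i}^{q}$ reduces the claim to $\langle X_{\Theta_{i}^{p}}^{(0)}\rangle=\langle X_{\Theta_{i}^{q}}^{(0)}\rangle=0$, and obtains this from the interchange rule $\langle X_{F}^{(0)}\rangle=X_{\langle F\rangle}^{(0)}$ (valid because the action is canonical, indeed Hamiltonian, for $\{,\}_{0}$) together with $\langle\Theta_{i}^{p}\rangle=\langle\Theta_{i}^{q}\rangle=0$ from (\ref{AZ}).
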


\begin{proof}
By (\ref{AH1}), the $\mathbb{S}^{1}$-action is Hamiltonian relative to the
Poisson bracket $\{,\}_{0}$ and hence for any $F\in C^{\infty}(M)$, the
$\mathbb{S}^{1}$-average of the Hamiltonian vector $X_{F}^{(0)}$ is given by
\begin{equation}
 \langle X_{F}^{(0)} \rangle  =X_{ \langle F \rangle  }^{(0)} \label{GP}%
\end{equation}
In particular, the condition $ \langle F \rangle  =0$ implies that $ \langle X_{F}^{(0)} \rangle  =0$. Then,
 it follows from (\ref{AZ}) that
\begin{equation}
\langle X_{\Theta_{i}^{p}}^{(0)} \rangle  = \langle X_{\Theta_{i}^{q}}^{(0)} \rangle  =0. \label{PG}%
\end{equation}
These relations and the $\mathbb{S}^{1}$-invariance of vector fields
(\ref{Hor1}) imply (\ref{HB}).
\end{proof}

Let us think of the domain $M\subset\mathbb{R}_{y,x}^{2r}\times\mathbb{R}%
_{p,q}^{2k}$ as the total space  of a trivial symplectic  bundle whose base
is the projection of $M$ to the ``slow" $(p,q)$-space and the fibers are given
by the intersections of $M$ with slices $\mathbb{R}_{y,x}^{2r}\times
\{(p,q)\}$. The $\mathbb{S}^{1}$-action leave invariant the fibers whose
tangent spaces form the vertical distribution $\mathbb{V}%
=\operatorname*{Span}\{\frac{\partial}{\partial y},\frac{\partial}{\partial
x}\}$. Denote by $\mathbb{H}$ the distribution on $M$ spanned by vector fields
$\operatorname{hor}_{i}^{p}$ and $\operatorname{hor}_{i}^{q}$ in (\ref{Hor1})
for $i=1,...,k$. Then, we have the $\mathbb{S}^{1}$-invariant splitting
\begin{equation}
TM=\mathbb{V}\oplus\mathbb{H}. \label{SPL}%
\end{equation}
Relations (\ref{HB}) show that the horizontal distribution $\mathbb{H}$
gives the Hannay-Berry connection in the sense of \cite{MaMoRa-90,Mon-88}. This connection is obtained by the averaging of the trivial
connection on $M$ with respect to the $\mathbb{S}^{1}$-action associated to
the periodic flow of $X_{H}^{(0)}$. The horizontal lifts of vector fields on
the base with respect to the Hannay-Berry are just given by 
(\ref{Hor1}). The vector fields tangent to the distributions $\mathbb{V}$ and
$\mathbb{H}$ are said to be vertical and horizontal, respectively. The
curvature of the Hannay-Berry connection is zero if and only if the horizontal
distribution $\mathbb{H}$ is integrable. This happens in the case when the
vector fields in (\ref{Hor1}) pairwise commute.

Now, let us consider the following $\mathbb{S}^{1}$-invariant bivector field
on $M$:
\begin{equation}
\Pi_{\Theta}:=\operatorname{hor}_{i}^{p}\wedge\operatorname{hor}_{i}^{q}
\label{HLF}%
\end{equation}
which is just the horizontal lift of the Poisson tensor on $\mathbb{R}%
_{p,q}^{2k}$ with respect to splitting (\ref{SPL}).

\begin{lemma}\label{lemm7}
The $\mathbb{S}^{1}$-average of the Poisson tensor of the slow Poisson bracket
$\{,\}_{1}$ has the representation
\begin{equation}
 \langle \frac{\partial}{\partial p}\wedge\frac{\partial}{\partial q} \rangle  =\Pi_{\Theta
}-\mathcal{L}_{ \langle V \rangle  }(\frac{\partial}{\partial y}\wedge\frac{\partial}{\partial
x}), \label{For1}%
\end{equation}
where $V$is a vector field on $M$ given by
\begin{equation}
V:=\frac{1}{2}\mathbf{i}_{\Theta}(\frac{\partial}{\partial p}\wedge
\frac{\partial}{\partial q})\equiv\frac{1}{2}\left(  \Theta_{i}^{p}%
\frac{\partial}{\partial q^{i}}-\Theta_{i}^{q}\frac{\partial}{\partial p^{i}%
}\right)  . \label{For2}%
\end{equation}
\end{lemma}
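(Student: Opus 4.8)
The plan is to reduce the statement to a single vertical bivector identity and then evaluate that identity by averaging a Schouten bracket. First I would use the alternative description of the horizontal lifts in (\ref{HB}), namely $\operatorname{hor}_i^p=\langle\frac{\partial}{\partial p^i}\rangle$ and $\operatorname{hor}_i^q=\langle\frac{\partial}{\partial q^i}\rangle$, to write $\frac{\partial}{\partial p^i}=\operatorname{hor}_i^p-X_{\Theta_i^p}^{(0)}$ and $\frac{\partial}{\partial q^i}=\operatorname{hor}_i^q-X_{\Theta_i^q}^{(0)}$. Substituting these into $\frac{\partial}{\partial p}\wedge\frac{\partial}{\partial q}=\frac{\partial}{\partial p^i}\wedge\frac{\partial}{\partial q^i}$ and expanding produces four terms. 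Since $\operatorname{hor}_i^p,\operatorname{hor}_i^q$ are $\mathbb{S}^1$-invariant, definition (\ref{AV1}) and multiplicativity of the pullback under $\wedge$ give $\langle\operatorname{hor}_i^p\wedge A\rangle=\operatorname{hor}_i^p\wedge\langle A\rangle$; hence by (\ref{PG}), which yields $\langle X_{\Theta_i^p}^{(0)}\rangle=\langle X_{\Theta_i^q}^{(0)}\rangle=0$, the two mixed cross terms average to zero while the purely horizontal term reproduces $\Pi_\Theta$. This leaves the reduction
\[
\langle\frac{\partial}{\partial p}\wedge\frac{\partial}{\partial q}\rangle=\Pi_\Theta+\langle X_{\Theta_i^p}^{(0)}\wedge X_{\Theta_i^q}^{(0)}\rangle,
\]
so it remains only to identify the residual vertical bivector $\langle X_{\Theta_i^p}^{(0)}\wedge X_{\Theta_i^q}^{(0)}\rangle$ with $-\mathcal{L}_{\langle V\rangle}(\frac{\partial}{\partial y}\wedge\frac{\partial}{\partial x})$.

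The key structural observation I would exploit is that the fast Poisson tensor $\pi_0:=\frac{\partial}{\partial y}\wedge\frac{\partial}{\partial x}$ is itself $\mathbb{S}^1$-invariant: by (\ref{AH1}) the generator is $\Upsilon=X_J^{(0)}$, a Hamiltonian vector field of $\{,\}_0$, and a Hamiltonian flow preserves its own Poisson tensor, so $\mathcal{L}_\Upsilon\pi_0=0$. Because pullback by the flow $\operatorname{Fl}_\Upsilon^t$ is an automorphism of the Schouten bracket and $\pi_0$ is invariant, averaging commutes with the operator $\mathcal{L}_{(\cdot)}\pi_0$, i.e. $\mathcal{L}_{\langle V\rangle}\pi_0=\langle\mathcal{L}_V\pi_0\rangle$ for any vector field $V$ (one integrates $(\operatorname{Fl}_\Upsilon^t)^*[V,\pi_0]=[(\operatorname{Fl}_\Upsilon^t)^*V,\pi_0]$ over the orbit and uses $\mathbb{R}$-linearity of the bracket in its first slot). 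Thus the target collapses to the local identity $\langle\mathcal{L}_V\pi_0\rangle=-\langle X_{\Theta_i^p}^{(0)}\wedge X_{\Theta_i^q}^{(0)}\rangle$.

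To finish, I would compute $\mathcal{L}_V\pi_0$ directly from the definition (\ref{For2}) of $V$. Using that, by (\ref{PH1}) applied to $\Theta_i^p$, the field $X_{\Theta_i^p}^{(0)}=-\frac{\partial\Theta_i^p}{\partial x}\cdot\frac{\partial}{\partial y}+\frac{\partial\Theta_i^p}{\partial y}\cdot\frac{\partial}{\partial x}$ is the $\{,\}_0$-Hamiltonian field of $\Theta_i^p$, a short Leibniz computation for the Lie derivative of $\pi_0$ along $\Theta_i^p\frac{\partial}{\partial q^i}$ and $\Theta_i^q\frac{\partial}{\partial p^i}$ gives the clean mixed expression $\mathcal{L}_V\pi_0=\frac{1}{2}\bigl(X_{\Theta_i^p}^{(0)}\wedge\frac{\partial}{\partial q^i}-X_{\Theta_i^q}^{(0)}\wedge\frac{\partial}{\partial p^i}\bigr)$. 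I would then substitute once more $\frac{\partial}{\partial q^i}=\operatorname{hor}_i^q-X_{\Theta_i^q}^{(0)}$ and $\frac{\partial}{\partial p^i}=\operatorname{hor}_i^p-X_{\Theta_i^p}^{(0)}$ and average: by invariance of the $\operatorname{hor}$ fields together with (\ref{PG}) the horizontal contributions vanish, and each of the two remaining terms equals $-\frac{1}{2}\langle X_{\Theta_i^p}^{(0)}\wedge X_{\Theta_i^q}^{(0)}\rangle$, so that $\langle\mathcal{L}_V\pi_0\rangle=-\langle X_{\Theta_i^p}^{(0)}\wedge X_{\Theta_i^q}^{(0)}\rangle$. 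Combining this with the reduction of the first paragraph gives exactly (\ref{For1}).

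I expect the main obstacle to be the interchange of averaging with the Schouten bracket and the accompanying type bookkeeping: the intermediate object $\mathcal{L}_V\pi_0$ is a mixed slow–fast bivector, whereas the residual term $\langle X_{\Theta_i^p}^{(0)}\wedge X_{\Theta_i^q}^{(0)}\rangle$ is purely vertical, so the two sides agree only after integrating over the $\mathbb{S}^1$-orbit. It is precisely the invariance $\mathcal{L}_\Upsilon\pi_0=0$ that legitimizes the interchange and converts the mixed expression into the vertical one, and this is the crux of the argument.
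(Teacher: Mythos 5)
Your proof is correct and is in substance the same as the paper's: the paper asserts the single pointwise identity (\ref{For3}) ``by straightforward but lengthy calculations'' and then averages it, using exactly the ingredients you invoke --- (\ref{HB}), (\ref{PG}), and the $\mathbb{S}^{1}$-invariance of $\frac{\partial}{\partial y}\wedge\frac{\partial}{\partial x}$. Your two-stage organization (average the expansion of $\frac{\partial}{\partial p}\wedge\frac{\partial}{\partial q}$ in horizontal and vertical parts first, then identify the residual term $\langle X_{\Theta_{i}^{p}}^{(0)}\wedge X_{\Theta_{i}^{q}}^{(0)}\rangle$ with $-\mathcal{L}_{\langle V\rangle}(\frac{\partial}{\partial y}\wedge\frac{\partial}{\partial x})$ via the interchange $\langle\mathcal{L}_{V}(\frac{\partial}{\partial y}\wedge\frac{\partial}{\partial x})\rangle=\mathcal{L}_{\langle V\rangle}(\frac{\partial}{\partial y}\wedge\frac{\partial}{\partial x})$) merely makes that lengthy calculation explicit, and adding your two computations together reproduces (\ref{For3}) verbatim.
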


\begin{proof}
By straightforward but lengthly calculations, we verify the following
identity
\begin{align}
\frac{\partial}{\partial p}\wedge\frac{\partial}{\partial q}  &  =\Pi_{\Theta
}-\frac{1}{2}\operatorname{hor}_{i}^{p}\wedge X_{\Theta_{i}^{q}}^{(0)}%
+\frac{1}{2}\operatorname{hor}_{i}^{q}\wedge X_{\Theta_{i}^{p}}^{(0)}%
\label{For3}\\
&  -\mathcal{L}_{V}(\frac{\partial}{\partial y}\wedge\frac{\partial}{\partial
x}).\nonumber
\end{align}
The first term on the right hand side of this equality is $\mathbb{S}^{1}%
$-invariant. The $\mathbb{S}^{1}$-average of the corresponding second and
third terms is zero because  of properties (\ref{HB}) and (\ref{PG}).
Finally, taking into account that the Poisson tensor $\frac{\partial}{\partial
y}\wedge\frac{\partial}{\partial x}$ is $\mathbb{S}^{1}$-invariant and
averaging the both sides of (\ref{For3}), we get decomposition (\ref{For1}).
\end{proof}

\section{An Intrinsic Splitting of Normal Forms}

Here, we apply the (non-canonical) Lie transform method to the perturbed
Hamiltonian vector field of system (\ref{SF1}), (\ref{SF2}). Taking into
account that the normalization procedure contains a certain freedom of
formulation, we show how to fix this freedom to get an intrinsic splitting of
a first order normal form.

We say that an open domain $N$ in $M$ is admissible if its closure $\bar{N}$
is compact and invariant with respect to the $\mathbb{S}^{1}$-action. By a
near-identity transformation we mean a smooth family of mappings
$\mathcal{T}_{\varepsilon}:N\rightarrow M,$ $\varepsilon\in(-\varepsilon
_{0},\varepsilon_{0})$ such that $\mathcal{T}_{0}=\operatorname*{id}$ and
$\mathcal{T}_{\varepsilon}$ is a diffeomorphism onto its image.

\begin{theorem}\label{maintheo}
Assume that the flow of the unperturbed Hamiltonian vector field $X_{H}^{(0)}$
is periodic with frequency function $\omega$. Then, for every admissible
domain domain $N\subset M$ and small enough $\varepsilon$, there exists a near
identity transformation $\mathcal{T}_{\varepsilon}:N\rightarrow M$  which
brings the Hamiltonian vector field $X_{H}=X_{H}^{(0)}+\varepsilon X_{H}%
^{(1)}$ of slow-fast system (\ref{SF1}), (\ref{SF2}) to the following
$\mathbb{S}^{1}$-invariant normal form of first order:%
\begin{equation}
\mathcal{T}_{\varepsilon}^{\ast}X_{H}=X_{H}^{(0)}+\varepsilon
(P_{\operatorname{hor}}+P_{\operatorname{ver}})+O(\varepsilon^{2}),
\label{NFG}%
\end{equation}
where $\ $the horizontal $P_{\operatorname{hor}}$ and vertical
$P_{\operatorname{ver}}$ vector fields on $M$ are given by%
\[
P_{\operatorname{hor}}:=\mathbf{i}_{dH}\Pi_{\Theta},\text{ \ }%
P_{\operatorname{ver}}:=X_{ \langle K \rangle  }^{(0)}%
\]
and
\begin{equation}
K:=\frac{1}{2}\left(  \mathcal{S}(\frac{\partial J}{\partial p^{i}}%
)\frac{\partial H}{\partial q^{i}}-\mathcal{S}(\frac{\partial J}{\partial
q^{i}})\frac{\partial H}{\partial p^{i}}\right)  . \label{COR1}%
\end{equation}
\end{theorem}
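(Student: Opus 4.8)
The plan is to apply the Lie transform method to normalize the vector field $X_H = X_H^{(0)} + \varepsilon X_H^{(1)}$ with respect to the $\mathbb{S}^1$-action. First I would write the near-identity transformation as the time-one flow of an $\varepsilon$-dependent generating vector field, $\mathcal{T}_\varepsilon = \operatorname{Fl}^{\varepsilon}_{W}$ for some complete vector field $W$ supported on the admissible domain $N$ (compactness of $\bar N$ guarantees completeness and the existence of the flow for small $\varepsilon$). Pulling back via the standard Lie transform expansion, one has
\[
\mathcal{T}_\varepsilon^\ast X_H = X_H^{(0)} + \varepsilon\left( X_H^{(1)} + [W, X_H^{(0)}] \right) + O(\varepsilon^2),
\]
so the first-order term I must normalize is $X_H^{(1)} + [W, X_H^{(0)}]$. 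Recalling $\Upsilon = \tfrac{1}{\omega}X_H^{(0)}$, the bracket $[W, X_H^{(0)}]$ is, up to the $\mathbb{S}^1$-invariant factor $\omega$, a Lie derivative $\mathcal{L}_\Upsilon$ acting on $W$; the goal is to choose $W$ so that the surviving first-order term is $\mathbb{S}^1$-invariant, i.e. equal to its own average.

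The key step is to isolate the $\mathbb{S}^1$-average of the perturbation. By the homological identity (\ref{Pr1}), $\mathcal{L}_\Upsilon \circ \mathcal{S} = \operatorname{id} - \langle\ \rangle$, so taking $W$ proportional to $\mathcal{S}(X_H^{(1)})$ (with the correct $\omega$-weighting so that $[W,X_H^{(0)}] = \omega\,\mathcal{L}_\Upsilon W$ cancels the oscillating part $X_H^{(1)} - \langle X_H^{(1)}\rangle$) reduces the first-order term to the average $\langle X_H^{(1)}\rangle$. Thus the normal form is determined by computing $\langle X_H^{(1)}\rangle$ explicitly. Here I would invoke the structural results already established: writing $X_H^{(1)} = \mathbf{i}_{dH}\!\left(\tfrac{\partial}{\partial p}\wedge\tfrac{\partial}{\partial q}\right)$ from (\ref{PH2}), I would average using Lemma \ref{lemm7}. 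Since $dH$ is $\mathbb{S}^1$-invariant, the interior-product/average compatibility stated after Lemma in Section 2 gives
\[
\langle X_H^{(1)}\rangle = \mathbf{i}_{dH}\Big\langle \tfrac{\partial}{\partial p}\wedge\tfrac{\partial}{\partial q}\Big\rangle
= \mathbf{i}_{dH}\Pi_\Theta - \mathbf{i}_{dH}\,\mathcal{L}_{\langle V\rangle}\!\left(\tfrac{\partial}{\partial y}\wedge\tfrac{\partial}{\partial x}\right).
\]
The first term is exactly $P_{\operatorname{hor}} = \mathbf{i}_{dH}\Pi_\Theta$. The remaining task is to identify the second term with the vertical Hamiltonian field $P_{\operatorname{ver}} = X^{(0)}_{\langle K\rangle}$, where $K$ is given by (\ref{COR1}).

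The main obstacle I expect is this last identification: rewriting $\mathbf{i}_{dH}\,\mathcal{L}_{\langle V\rangle}\!\left(\tfrac{\partial}{\partial y}\wedge\tfrac{\partial}{\partial x}\right)$ as a single fast-Hamiltonian vector field. The strategy is to expand the Lie derivative via the Leibniz rule, use that $\tfrac{\partial}{\partial y}\wedge\tfrac{\partial}{\partial x}$ is the fast Poisson tensor so that contracting with $dH$ produces $X_H^{(0)}$-type terms, and keep careful track of how $\langle V\rangle$ (built from $\Theta_i^p, \Theta_i^q = \mathcal{S}(\partial_{p^i}J), \mathcal{S}(\partial_{q^i}J)$ via (\ref{For2})) pairs against the fast derivatives of $H$. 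Because the action is Hamiltonian relative to $\{,\}_0$ with generator $J$ (Lemma \ref{lem1}), any fast-Hamiltonian field averages according to $\langle X^{(0)}_F\rangle = X^{(0)}_{\langle F\rangle}$ by (\ref{GP}); applying this lets me pull the average inside and recognize the Hamiltonian $K = \tfrac{1}{2}\big(\mathcal{S}(\partial_{p^i}J)\,\partial_{q^i}H - \mathcal{S}(\partial_{q^i}J)\,\partial_{p^i}H\big)$, whence $P_{\operatorname{ver}} = X^{(0)}_{\langle K\rangle}$. The bookkeeping of these contractions and the factor of $\tfrac{1}{2}$ is where the computation is genuinely delicate, but it is routine once the averaging identities are applied in the right order.
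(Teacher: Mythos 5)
Your strategy follows the paper's proof up to the last step, but the final identification you propose is false, and it is precisely the point the theorem is designed to handle. The correct identity is
\[
\mathbf{i}_{dH}\circ\mathcal{L}_{\langle V\rangle}\Bigl(\frac{\partial}{\partial y}\wedge\frac{\partial}{\partial x}\Bigr)
=-X^{(0)}_{\mathcal{L}_{\langle V\rangle}H}+[\langle V\rangle,X_{H}^{(0)}]
=-X^{(0)}_{\langle K\rangle}+\frac{1}{\omega}\,(\mathcal{L}_{\langle V\rangle}\omega)\,X_{H}^{(0)},
\]
because $[\langle V\rangle,X_{H}^{(0)}]=[\langle V\rangle,\omega\Upsilon]=(\mathcal{L}_{\langle V\rangle}\omega)\Upsilon$ (the bracket $[\langle V\rangle,\Upsilon]$ vanishes since $\langle V\rangle$ is $\mathbb{S}^{1}$-invariant). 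Consequently
\[
\langle X_{H}^{(1)}\rangle=P_{\operatorname{hor}}+P_{\operatorname{ver}}+g\,X_{H}^{(0)},\qquad g=-\frac{1}{\omega}\,\mathcal{L}_{\langle V\rangle}\omega,
\]
which is exactly the paper's formula (\ref{Main1}). The residual term $g\,X_{H}^{(0)}$ does not vanish unless $\omega$ is constant along $\langle V\rangle$, and, as the paper remarks after (\ref{Main1}), it is not Hamiltonian relative to $\{,\}_{0}$ in general. So a generator chosen only to kill the oscillating part of $X_{H}^{(1)}$ produces a normal form whose first-order term is $\langle X_{H}^{(1)}\rangle$, not $P_{\operatorname{hor}}+P_{\operatorname{ver}}$; your plan as written would therefore prove a different statement.

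The missing idea is to exploit the remaining freedom in the Lie-transform generator. The paper takes $Z=\frac{1}{\omega}\mathcal{S}(X_{H}^{(1)})+\frac{1}{\omega^{3}}\mathcal{S}^{2}(\mathcal{L}_{X_{H}^{(1)}}\omega)X_{H}^{(0)}+Y$ with $Y$ an arbitrary $\mathbb{S}^{1}$-invariant vector field; the resulting first-order term is $\langle X_{H}^{(1)}\rangle+\frac{1}{\omega}(\mathcal{L}_{Y}\omega)X_{H}^{(0)}$ (formula (\ref{FAG1})), and the choice $Y=\langle V\rangle$ cancels $g\,X_{H}^{(0)}$ exactly. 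Note that your intermediate claim that $[W,X_{H}^{(0)}]$ is ``up to the factor $\omega$'' a Lie derivative along $\Upsilon$ conceals the same phenomenon one stage earlier: $[W,X_{H}^{(0)}]=-\omega\,\mathcal{L}_{\Upsilon}W+(\mathcal{L}_{W}\omega)\Upsilon$, and the second summand is the reason for the extra term $\frac{1}{\omega^{3}}\mathcal{S}^{2}(\mathcal{L}_{X_{H}^{(1)}}\omega)X_{H}^{(0)}$ in $Z$. These reparametrization terms proportional to $X_{H}^{(0)}$ are not bookkeeping details that come out in the wash; tracking and cancelling them is the substance of the proof.
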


\begin{proof}
First, let us apply a general normalization result \cite{MisVor-12} to the
perturbation vector field $X_{H}$. Let
\begin{equation}
Z=\frac{1}{\omega}\mathcal{S}(X_{H}^{(1)})+\frac{1}{\omega^{3}}\mathcal{S}%
^{2}(\mathcal{L}_{X_{H}^{(1)}}\omega)X_{H}^{(0)}+Y, \label{NAG}%
\end{equation}
where $Y$ is an arbitrary $\mathbb{S}^{1}$-invariant vector field on $M$.
Denote by
\begin{equation}
\mathcal{T}_{\varepsilon}=\operatorname{Fl}_{Z}^{t}\mid_{t=\varepsilon}
\label{FAG}%
\end{equation}
the time-$\varepsilon$ flow of the vector field $Z$. Then, for small enough
$\varepsilon$, the near-identity transformation $\mathcal{T}_{\varepsilon}$
sends $X_{H}$ to the following first order normal form \cite{MisVor-12} :
\begin{equation}
\mathcal{T}_{\varepsilon}^{\ast}X_{H}=X_{H}^{(0)}+\varepsilon\left(
 \langle X_{H}^{(1)} \rangle  +\frac{1}{\omega}\mathcal{L}_{Y}(\omega)X_{H}^{(0)}\right)
+O(\varepsilon^{2}). \label{FAG1}%
\end{equation}
Next, let us choose an $\mathbb{S}^{1}$-invariant vector field $Y$ in a such a
way that the second terms in normal forms (\ref{FAG1}) and (\ref{NFG})
coincide. It follows from representation $X_{H}^{(1)}=\mathbf{i}_{dH}%
(\frac{\partial}{\partial p}\wedge\frac{\partial}{\partial q})$ that%
\begin{equation}
 \langle X_{H}^{(1)} \rangle  =\mathbf{i}_{dH} \langle \frac{\partial}{\partial p}\wedge\frac
{\partial}{\partial q} \rangle  . \label{For4}%
\end{equation}
On the other hand, taking into account that
\[
[ \langle V \rangle  ,X_{H}^{(0)}]=[ \langle V \rangle  ,\omega\Upsilon] \rangle  =(\mathcal{L}_{ \langle V \rangle  }%
\omega)\Upsilon=\frac{1}{\omega}(\mathcal{L}_{ \langle V \rangle  }\omega)X_{H}^{(0)},
\]
by the standard properties of the Lie derivative \cite{AbMa-88}, we obtain
\begin{align*}
\mathbf{i}_{dH}\circ\mathcal{L}_{ \langle V \rangle  }(\frac{\partial}{\partial y}\wedge
\frac{\partial}{\partial x})  &  =-X_{\mathcal{L}_{ \langle V \rangle  }H}^{(0)}+[ \langle V \rangle  ,X_{H}%
^{(0)}]\\
&  =-X_{\mathcal{L}_{ \langle V \rangle  }H}^{(0)}+\frac{1}{\omega}(\mathcal{L}_{ \langle V \rangle  }%
\omega)X_{H}^{(0)}.
\end{align*}
Combining this relation with (\ref{For1}) and (\ref{For4}), we get the
following representation
\begin{equation}
 \langle X_{H}^{(1)} \rangle  =P_{\operatorname{hor}}+P_{\operatorname{ver}}-\frac{1}{\omega
}(\mathcal{L}_{ \langle V \rangle  }\omega)X_{H}^{(0)}. \label{INP1}%
\end{equation}
Using (\ref{For2}), we verify that
\begin{equation}
\mathcal{L}_{V}H=K \label{INP}%
\end{equation}
and hence $\mathcal{L}_{ \langle V \rangle  }H= \langle \mathcal{L}_{V}H \rangle  = \langle K \rangle  $. Finally, the desired
choice of $Y$ in (\ref{FAG1}) is that $Y= \langle V \rangle  $.
\end{proof}

The horizontal and vertical components of the second term in normal form
(\ref{NFG}) possess the following properties. The vertical component
$P_{\operatorname{ver}}$ is a Hamiltonian vector field relative to the slow
Poisson bracket $\{,\}_{0}$ and the function $ \langle K \rangle  $ which can be interpretated
as an $\mathbb{S}^{1}$-invariant correction of first order to the Hamiltonian
$H$ (see, Theorem 6.2). The horizontal component $P_{\operatorname{hor}}$
involves the $\mathbb{H}$ -horizontal lift $\Pi_{\Theta}$ (\ref{HLF}) of the
Poisson tensor on the $(p,q)$-space which satisfies the Jacobi identity only
in the case when the horizontal distribution $\mathbb{H}$ is integrable \cite{DaVo-08,Vor-11}.
Therefore, in general, $P_{\operatorname{hor}}$ does not inherit any natural
Hamiltonian structure from $X_{H}$ .

Moreover, from (\ref{INP1}) and (\ref{INP}), we derive the following
relationship between the $\mathbb{S}^{1}$-average $ \langle X_{H}^{(1)} \rangle  $ of the
perturbation vector field and the the second term in normal form
(\ref{NFG}).

\begin{corollary}
The averaged perturbation vector field has the representation
\begin{equation}
 \langle X_{H}^{(1)} \rangle  =P_{\operatorname{hor}}+P_{\operatorname{ver}}+gX_{H}^{(0)},
\label{Main1}%
\end{equation}
where%
\[
g=-\frac{1}{2\omega}\left(  \mathcal{S}(\frac{\partial J}{\partial p^{i}%
})\frac{\partial\omega}{\partial q^{i}}-\mathcal{S}(\frac{\partial J}{\partial
q^{i}})\frac{\partial\omega}{\partial p^{i}}\right)  .
\]

\end{corollary}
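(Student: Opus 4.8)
The plan is to read the Corollary as an explicit unpacking of equation (\ref{INP1}), which was already established inside the proof of Theorem \ref{maintheo}. That identity reads
\[
\langle X_{H}^{(1)}\rangle = P_{\operatorname{hor}} + P_{\operatorname{ver}} - \frac{1}{\omega}(\mathcal{L}_{\langle V\rangle}\omega)X_{H}^{(0)},
\]
so matching it term by term with the asserted formula (\ref{Main1}) leaves a single thing to prove: that the scalar coefficient of $X_{H}^{(0)}$ is the stated $g$, i.e. $g = -\frac{1}{\omega}\mathcal{L}_{\langle V\rangle}\omega$. Thus the whole content of the Corollary is the evaluation of $\mathcal{L}_{\langle V\rangle}\omega$ in closed form.

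First I would remove the average from the generator. Since $\omega$ is $\mathbb{S}^{1}$-invariant, $d\omega$ is an $\mathbb{S}^{1}$-invariant 1-form, so the compatibility of $\langle\ \rangle$ with the interior product by invariant 1-forms (Section 2) gives
\[
\mathcal{L}_{\langle V\rangle}\omega = \mathbf{i}_{d\omega}\langle V\rangle = \langle\mathbf{i}_{d\omega}V\rangle = \langle\mathcal{L}_{V}\omega\rangle.
\]
Then I would compute $\mathcal{L}_{V}\omega$ directly from the explicit expression (\ref{For2}) for $V$, mirroring the computation $\mathcal{L}_{V}H = K$ recorded in (\ref{INP}). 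Since $V = \frac{1}{2}(\Theta_{i}^{p}\frac{\partial}{\partial q^{i}} - \Theta_{i}^{q}\frac{\partial}{\partial p^{i}})$ with $\Theta_{i}^{p} = \mathcal{S}(\partial J/\partial p^{i})$ and $\Theta_{i}^{q} = \mathcal{S}(\partial J/\partial q^{i})$, one gets
\[
\mathcal{L}_{V}\omega = \frac{1}{2}\Bigl(\mathcal{S}(\tfrac{\partial J}{\partial p^{i}})\tfrac{\partial\omega}{\partial q^{i}} - \mathcal{S}(\tfrac{\partial J}{\partial q^{i}})\tfrac{\partial\omega}{\partial p^{i}}\Bigr),
\]
which is exactly the combination $-\omega g$ appearing in the Corollary. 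Feeding this back completes the identification once the average in $\langle\mathcal{L}_{V}\omega\rangle$ is accounted for.

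The delicate step — and the one I expect to be the main obstacle — is precisely the passage from $\langle\mathcal{L}_{V}\omega\rangle$ to the un-averaged combination written for $g$. A priori $g$ must be $\mathbb{S}^{1}$-invariant, since in (\ref{Main1}) the fields $\langle X_{H}^{(1)}\rangle$, $P_{\operatorname{hor}}$, $P_{\operatorname{ver}}$ and $X_{H}^{(0)}$ are all invariant; hence dropping the bracket is justified only after one controls the oscillating part of $\mathcal{L}_{V}\omega$. This is where the period--energy relation (\ref{PE}) has to be used: it forces $\omega$ to be, fiberwise, a function of $H$, linking the fast differentials $d_{y}\omega,d_{x}\omega$ to $d_{y}H,d_{x}H$ through (\ref{Rel1}) and governing how $\partial\omega/\partial p^{i},\partial\omega/\partial q^{i}$ behave under $\mathcal{L}_{\Upsilon}$ (recall $\mathcal{L}_{\Upsilon}\Theta_{i}^{p} = \partial J/\partial p^{i}$ from Lemma \ref{lem2}). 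I would therefore split $\mathcal{L}_{V}\omega$ into its $\mathbb{S}^{1}$-invariant part and a purely oscillating remainder, use (\ref{PE}) together with $\langle\Theta_{i}^{p}\rangle=\langle\Theta_{i}^{q}\rangle=0$ to show the remainder contributes nothing to the average, and so recover the displayed closed form for $g$. It is this invariance bookkeeping, rather than any single hard estimate, that is the crux of the argument.
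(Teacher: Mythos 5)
Your main line coincides with the paper's (which is essentially a one\--line remark: ``from (\ref{INP1}) and (\ref{INP}) we derive\dots''): read (\ref{Main1}) off from (\ref{INP1}), so that the only content is identifying the coefficient of $X_{H}^{(0)}$; then use the $\mathbb{S}^{1}$-invariance of $d\omega$ to write $\mathcal{L}_{\langle V\rangle}\omega=\langle\mathcal{L}_{V}\omega\rangle$, and compute from (\ref{For2}) that
\[
\mathcal{L}_{V}\omega=\frac{1}{2}\Bigl(\mathcal{S}(\tfrac{\partial J}{\partial p^{i}})\tfrac{\partial\omega}{\partial q^{i}}-\mathcal{S}(\tfrac{\partial J}{\partial q^{i}})\tfrac{\partial\omega}{\partial p^{i}}\Bigr),
\]
the exact analogue of $\mathcal{L}_{V}H=K$ in (\ref{INP}). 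Up to this point your argument is correct and is precisely what the paper intends.

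The final paragraph, however, does not close the gap you correctly flagged, and as written that step would fail. What (\ref{INP1}) actually yields is $g=-\frac{1}{\omega}\langle\mathcal{L}_{V}\omega\rangle$, whereas the displayed $g$ is $-\frac{1}{\omega}\mathcal{L}_{V}\omega$ with no average. Your plan --- split $\mathcal{L}_{V}\omega$ into its invariant part plus an oscillating remainder and ``show the remainder contributes nothing to the average'' --- is vacuous: the oscillating part has zero average by construction, and establishing that only reproduces the averaged formula, not the un\--averaged one. To drop the brackets you would need the remainder to vanish identically, i.e.\ $\mathcal{L}_{V}\omega$ to be $\mathbb{S}^{1}$-invariant, and neither (\ref{PE}) nor $\langle\Theta_{i}^{p}\rangle=\langle\Theta_{i}^{q}\rangle=0$ delivers this: $\Theta_{i}^{p}$ and $\partial\omega/\partial q^{i}=\mathbf{i}_{d\omega}\frac{\partial}{\partial q^{i}}$ each fail to be invariant in general (the coordinate fields $\frac{\partial}{\partial p^{i}},\frac{\partial}{\partial q^{i}}$ are not invariant --- that is the whole point of replacing them by $\operatorname{hor}_{i}^{p},\operatorname{hor}_{i}^{q}$), and the average of a product is not the product of averages. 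The consistent reading --- forced by the invariance of every other term in (\ref{Main1}) and parallel to the appearance of $\langle K\rangle$ rather than $K$ in $P_{\operatorname{ver}}$ --- is that $g$ carries an overall $\mathbb{S}^{1}$-average. With that reading your first two steps already constitute a complete proof; without it, no argument along the lines you propose will succeed.
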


Remark, that last term on the right hand side of (\ref{Main1}) is not
Hamiltonian relative to the bracket $\{,\}_{0}$, in general.

Property (\ref{Hor3}) yields $\mathcal{L}_{\mathcal{P}_{\operatorname{hor}}%
}J=0$. Moreover, $\mathcal{L}_{\mathcal{P}_{\operatorname{ver}}}%
J=\mathcal{L}_{ \langle K \rangle  }J=-\mathcal{L}_{\Upsilon} \langle K \rangle  =0$ and $\mathcal{L}%
_{gX_{H}^{(0)}}J=g\omega\mathcal{L}_{\Upsilon}J=0$. Therefore, we arrive at
the following fact.

\begin{corollary}
The momentum map $J$ (\ref{AH2}) is a first integral of the averaged
perturbation vector field,%
\begin{equation}
\mathcal{L}_{ \langle X_{H}^{(1)} \rangle  }J=0. \label{ADI}%
\end{equation}

\end{corollary}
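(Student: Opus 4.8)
The plan is to read the result off directly from the intrinsic decomposition (\ref{Main1}) of the averaged perturbation that was just established, namely $\langle X_H^{(1)}\rangle = P_{\operatorname{hor}}+P_{\operatorname{ver}}+gX_H^{(0)}$, and to check that each of the three summands annihilates $J$. The key structural observation is that the Lie derivative of the \emph{function} $J$ is $\mathbb{R}$-linear and $C^\infty(M)$-homogeneous in its vector-field slot, i.e. $\mathcal{L}_{fW}J=f\,\mathcal{L}_W J$ for $f\in C^\infty(M)$. Hence it suffices to handle $P_{\operatorname{hor}}$, $P_{\operatorname{ver}}$ and $gX_H^{(0)}$ one at a time, and in each case an already-proved identity does the work.

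For the horizontal part I would expand $P_{\operatorname{hor}}=\mathbf{i}_{dH}\Pi_\Theta$ using the definition (\ref{HLF}) of $\Pi_\Theta$. Contracting $dH$ into $\operatorname{hor}_i^p\wedge\operatorname{hor}_i^q$ exhibits $P_{\operatorname{hor}}$ as the $C^\infty(M)$-linear combination $(\mathcal{L}_{\operatorname{hor}_i^p}H)\operatorname{hor}_i^q-(\mathcal{L}_{\operatorname{hor}_i^q}H)\operatorname{hor}_i^p$ of the horizontal lifts. Property (\ref{Hor3}), which states that $J$ is a common first integral of every $\operatorname{hor}_i^p$ and $\operatorname{hor}_i^q$, then yields $\mathcal{L}_{P_{\operatorname{hor}}}J=0$ via the homogeneity noted above. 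For the vertical part $P_{\operatorname{ver}}=X_{\langle K\rangle}^{(0)}$ I would invoke the identity $\mathcal{L}_{X_F^{(0)}}J=\{F,J\}_0=-\mathcal{L}_\Upsilon F$ recorded in the proof of Lemma \ref{lem2} (a consequence of $\Upsilon=X_J^{(0)}$); applying it with $F=\langle K\rangle$ gives $\mathcal{L}_{P_{\operatorname{ver}}}J=-\mathcal{L}_\Upsilon\langle K\rangle$, which vanishes because any $\mathbb{S}^1$-average is $\mathbb{S}^1$-invariant by (\ref{Pr2}).

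Finally, for the term $gX_H^{(0)}$ I would use $X_H^{(0)}=\omega\Upsilon$ from (\ref{ING1}), so that $\mathcal{L}_{gX_H^{(0)}}J=g\omega\,\mathcal{L}_\Upsilon J=g\omega\{J,J\}_0=0$ by antisymmetry of the fast bracket. Summing the three contributions produces the desired equality (\ref{ADI}).

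I do not expect a genuine obstacle: the statement is an immediate corollary of the splitting (\ref{Main1}) together with the first-integral property (\ref{Hor3}) and the invariance of averages. The only point needing a little care is the expansion of $P_{\operatorname{hor}}$ as a $C^\infty(M)$-combination of the horizontal lifts, after which the homogeneity $\mathcal{L}_{fW}J=f\,\mathcal{L}_W J$ of the Lie derivative in its field argument makes the coefficients $\mathcal{L}_{\operatorname{hor}_i^p}H$ harmless precisely because they multiply vector fields that already kill $J$.
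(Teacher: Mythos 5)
Your proposal is correct and follows essentially the same route as the paper: the text preceding the corollary verifies $\mathcal{L}_{P_{\operatorname{hor}}}J=0$ via (\ref{Hor3}), $\mathcal{L}_{P_{\operatorname{ver}}}J=-\mathcal{L}_{\Upsilon}\langle K\rangle=0$, and $\mathcal{L}_{gX_H^{(0)}}J=g\omega\mathcal{L}_{\Upsilon}J=0$, then sums the three terms of (\ref{Main1}). Your added detail on expanding $\mathbf{i}_{dH}\Pi_{\Theta}$ as a $C^{\infty}(M)$-combination of the horizontal lifts just makes explicit what the paper leaves implicit.
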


\begin{remark}
The momentum map $J$ is uniquely determined by condition (\ref{AV1}) up to
adding a smooth function $f=f(p,q)$. But, such a renormalization of  $J$ does
not preserve property (\ref{ADI}).
\end{remark}

\begin{remark}
Suppose that the $\mathbb{S}^{1}$-action associated to the periodic flow of
$X_{H}^{(0)}$ is free on $M$ and not necessarily trivial. Then, it follows
from (\ref{ADI}) and the periodic averaging theorem \cite{Arno-63,SanVer-07} that action $J$ (\ref{AH2}) is an adiabatic invariant of
slow-fast Hamiltonian system (\ref{SF1}), (\ref{SF2}), that is, $|
J(\operatorname{Fl}_{X_{H}}^{t}(m)-J(m)|=O(\varepsilon)$ , for $m\in N$ and
 $t\sim\frac{1}{\varepsilon}$. This is just the contents of
the classical adiabatic theorem \cite{ArKN-88,Ne-08} which is usually
formulated in the case $r=1$ and for domains of action-angle variables.
\end{remark}

The following fact can be useful in theory of semiclassical quantization of
slow-fast Hamiltonian systems \cite{Kar}.

\begin{proposition}
Under hypothesis of Theorem \ref{maintheo}, the function
\begin{equation}
F=J-\frac{\varepsilon}{\omega}\mathcal{S}(\{H,J\}_{1}) \label{INT1}%
\end{equation}
is an approximate first integral on $M$ of slow-fast Hamiltonian system
(\ref{SF1}), (\ref{SF2}) in the sense that
\begin{equation}
\mathcal{L}_{X_{H}}F=O(\varepsilon^{2}). \label{APF1}%
\end{equation}
\end{proposition}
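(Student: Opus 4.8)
The plan is to differentiate $F$ along $X_H$ directly and track the orders in $\varepsilon$. Writing $X_H=X_H^{(0)}+\varepsilon X_H^{(1)}$ and abbreviating $G:=\mathcal{S}(\{H,J\}_1)$, so that $F=J-\frac{\varepsilon}{\omega}G$, I would expand
\[
\mathcal{L}_{X_H}F=\mathcal{L}_{X_H^{(0)}}J+\varepsilon\,\mathcal{L}_{X_H^{(1)}}J-\varepsilon\,\mathcal{L}_{X_H^{(0)}}\Big(\frac{G}{\omega}\Big)+O(\varepsilon^2),
\]
where the term $-\varepsilon^2\,\mathcal{L}_{X_H^{(1)}}(G/\omega)$ is absorbed into $O(\varepsilon^2)$. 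The strategy is to show that the zeroth-order term vanishes and that the two first-order contributions cancel up to the average $\langle\{H,J\}_1\rangle$, which itself turns out to be zero.

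First I would dispose of the zeroth-order term. Since $\Upsilon=X_J^{(0)}$ by (\ref{AH1}) and $X_H^{(0)}=\omega\Upsilon$, we get $\mathcal{L}_{X_H^{(0)}}J=\omega\,\mathcal{L}_{\Upsilon}J=\omega\{J,J\}_0=0$; the same $\mathbb{S}^1$-invariance gives $\mathcal{L}_{X_H^{(0)}}\omega=0$, so the factor $1/\omega$ passes through the zeroth-order Lie derivative. Then, using the homological identity (\ref{Pr1}) in the form $\mathcal{L}_{\Upsilon}\mathcal{S}(A)=A-\langle A\rangle$, I would compute
\[
\mathcal{L}_{X_H^{(0)}}\Big(\frac{G}{\omega}\Big)=\frac{1}{\omega}\,\omega\,\mathcal{L}_{\Upsilon}\mathcal{S}(\{H,J\}_1)=\{H,J\}_1-\langle\{H,J\}_1\rangle.
\]
Substituting this together with $\mathcal{L}_{X_H^{(1)}}J=\{H,J\}_1$ into the expansion, the two copies of $\varepsilon\{H,J\}_1$ cancel and I am left with $\mathcal{L}_{X_H}F=\varepsilon\,\langle\{H,J\}_1\rangle+O(\varepsilon^2)$.

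The decisive remaining step is to show $\langle\{H,J\}_1\rangle=0$. Since $J$ is $\mathbb{S}^1$-invariant, so is the $1$-form $dJ$, and I would rewrite the bracket as the interior product $\{H,J\}_1=\mathbf{i}_{dJ}X_H^{(1)}$ of the invariant $1$-form $dJ$ with the vector field $X_H^{(1)}$. The naturality of the averaging operator with respect to the interior product (recorded at the end of Section 2) then yields
\[
\langle\{H,J\}_1\rangle=\langle\mathbf{i}_{dJ}X_H^{(1)}\rangle=\mathbf{i}_{dJ}\langle X_H^{(1)}\rangle=\mathcal{L}_{\langle X_H^{(1)}\rangle}J,
\]
which is exactly zero by Corollary (\ref{ADI}). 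This is the main obstacle in the sense that the whole argument hinges on the adiabatic-invariance identity $\mathcal{L}_{\langle X_H^{(1)}\rangle}J=0$; once it is invoked the surviving $O(\varepsilon)$ term is annihilated and (\ref{APF1}) follows. I would close by remarking that the specific coefficient $-\frac{1}{\omega}\mathcal{S}(\{H,J\}_1)$ of the $\varepsilon$-correction is dictated precisely by the need to produce the counterterm $-\{H,J\}_1+\langle\{H,J\}_1\rangle$ via (\ref{Pr1}), which is what makes the non-averaged part of $\mathcal{L}_{X_H^{(1)}}J$ disappear.
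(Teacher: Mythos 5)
Your argument is correct and is essentially the paper's proof run in the verification direction: the paper sets up the order-by-order homological equations $\mathcal{L}_{X_H^{(0)}}F_0=0$ and $\mathcal{L}_{\Upsilon}F_1=-\tfrac{1}{\omega}\mathcal{L}_{X_H^{(1)}}J$, invokes (\ref{ADI}) as the solvability condition, and produces $F_1=-\tfrac{1}{\omega}\mathcal{S}(\{H,J\}_1)$ via (\ref{Pr1}), which is exactly the cancellation you check directly. Your explicit justification of $\langle\{H,J\}_1\rangle=\mathcal{L}_{\langle X_H^{(1)}\rangle}J$ through the naturality of averaging under $\mathbf{i}_{dJ}$ is the same identity the paper asserts, so no substantive difference.
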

\begin{proof}
For a function $F=F_{0}+\varepsilon F_{1}$ condition (\ref{APF1}) holds if and
only if the functions $F_{0}$ and $F_{1}$ are solutions to the following
equations
\begin{equation}
\mathcal{L}_{X_{H}^{(0)}}F_{0}=0, \label{Hom1}%
\end{equation}
\begin{equation}
\mathcal{L}_{X_{H}^{(0)}}F_{1}=-\mathcal{L}_{X_{H}^{(1)}}F_{0}. \label{Hom2}%
\end{equation}
If we put $F_{0}=J$, then (\ref{Hom1}) is satisfied because of the
$\mathbb{S}^{1}$-invariance of $H$. In terms of the infinitesimal generator
$\Upsilon$, equation (\ref{Hom2}) for $F_{1}$ is written as
\begin{equation}
\mathcal{L}_{\Upsilon}F_{1}=-\frac{1}{\omega}\mathcal{L}_{X_{H}^{(1)}}J.
\label{Hom3}%
\end{equation}
By the identity $ \langle \mathcal{L}_{X_{H}^{(1)}}J \rangle  =\mathcal{L}_{ \langle X_{H}^{(1)}%
 \rangle  }J$, the solvability condition of equation (\ref{Hom3}) just coincides with
(\ref{ADI}). Finally, equality (\ref{Pr1}) shows that a particular
solution to (\ref{Hom3}) is given by the formula $F_{1}=-\frac{1}{\omega
}\mathcal{S}(\mathcal{L}_{X_{H}^{(0)}}J_{1})=-\frac{1}{\omega}\mathcal{S}%
(\{H,J\}_{1})$.
\end{proof}

\section{An Approximate Hamiltonian Model with $\mathbb{S}^{1}$-Symmetry}

Here, under hypothesis of Theorem \ref{maintheo}, we give an alternative derivation of normal form splitting in
(\ref{NFG}) by applying a normalization procedure to the Poisson bracket (\ref{PB}) and
the Hamiltonian. In the first step, by means of a near-identity transformation
$\Phi_{\varepsilon}$, we correct original Poisson bracket (\ref{PB}) to get
an $\mathbb{S}^{1}$-invariant one. In the
second step, a canonical averaging transformation is applied to the
transformed Hamiltonian $H\circ\Phi_{\varepsilon}$.

First, we recall some facts concerning to the averaging procedure for
symplectic and Poisson structures. Consider the symplectic form associated
to Poisson bracket (\ref{PB}):
\[
\Omega=\frac{1}{\varepsilon}dp\wedge dq+dy\wedge dx.
\]
Then, by (\ref{FS1}) the $\mathbb{S}^{1}$-average of $\Omega$ is given by the
formula
\[
 \langle \Omega \rangle  =\Omega-d\Theta.
\]

\begin{lemma}
Let $N\subset M$ be an admissible domain. Then, for sufficiently small
$\varepsilon\neq0$, the $\mathbb{S}^{1}$-average $ \langle \Omega \rangle  $ is a symplectic
form on $N$. Moreover, there exists a near-identity transformation
$\Phi_{\varepsilon}:N\rightarrow M$ which is a symplectomorphism between
$\Omega$ and $ \langle \Omega \rangle  $,%
\begin{equation}
\Phi_{\varepsilon}^{\ast}\Omega= \langle \Omega \rangle   \label{Sym}%
\end{equation}

\end{lemma}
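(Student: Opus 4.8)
The plan is to establish nondegeneracy of $\langle\Omega\rangle$ by a leading-order computation in $\varepsilon$, and then to produce $\Phi_{\varepsilon}$ by Moser's homotopy method. Closedness of $\langle\Omega\rangle$ is immediate, since averaging commutes with $d$: $d\langle\Omega\rangle=\langle d\Omega\rangle=0$ because $\Omega$ is closed. For nondegeneracy I would examine the top exterior power. Writing $\langle\Omega\rangle=\frac{1}{\varepsilon}\,dp\wedge dq+(dy\wedge dx-d\Theta)$ via the relation (\ref{FS1}), the crucial structural observation is that $\Theta=\Theta_{i}^{p}dp^{i}+\Theta_{i}^{q}dq^{i}$ carries only the slow covectors $dp^{i},dq^{i}$ (definition (\ref{Hor2})), so every term of $d\Theta$ contains at least one factor $dp^{i}$ or $dq^{i}$. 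Consequently, in the expansion of $\langle\Omega\rangle^{r+k}$ the only surviving power of $dp\wedge dq$ is the $k$-th, and the dominant term $\frac{1}{\varepsilon^{k}}(dp\wedge dq)^{k}\wedge(dy\wedge dx-d\Theta)^{r}$ collapses to $\frac{1}{\varepsilon^{k}}(dp\wedge dq)^{k}\wedge(dy\wedge dx)^{r}$, because $(dp\wedge dq)^{k}$ already saturates all $2k$ slow directions and annihilates every contribution of $d\Theta$. Thus $\varepsilon^{k}\langle\Omega\rangle^{r+k}$ is a smooth family of top forms whose value at $\varepsilon=0$ is a nonvanishing multiple of the standard volume form; by compactness of $\bar{N}$ it remains nonvanishing for small $\varepsilon\neq0$, so $\langle\Omega\rangle$ is symplectic on $N$.

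For the symplectomorphism I would use Moser's trick along the linear path $\Omega_{s}:=\langle\Omega\rangle+s\,d\Theta$, $s\in[0,1]$, so that $\Omega_{0}=\langle\Omega\rangle$ and $\Omega_{1}=\Omega$. Each $\Omega_{s}=\frac{1}{\varepsilon}\,dp\wedge dq+(dy\wedge dx-(1-s)d\Theta)$ is closed, and the same leading-order argument shows $\Omega_{s}^{r+k}\neq0$ uniformly in $s\in[0,1]$ for small $\varepsilon$, so every $\Omega_{s}$ is symplectic. I would then seek a time-dependent vector field $W_{s}$ solving the Moser equation $\mathbf{i}_{W_{s}}\Omega_{s}=-\Theta$; since $\frac{d}{ds}\Omega_{s}=d\Theta$ and $\Omega_{s}$ is closed, Cartan's formula gives $\frac{d}{ds}(\psi_{s}^{\ast}\Omega_{s})=\psi_{s}^{\ast}\,d(\mathbf{i}_{W_{s}}\Omega_{s}+\Theta)=0$ for the flow $\psi_{s}$ of $W_{s}$ with $\psi_{0}=\operatorname{id}$. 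Setting $\Phi_{\varepsilon}:=\psi_{1}$ then yields $\Phi_{\varepsilon}^{\ast}\Omega=\Phi_{\varepsilon}^{\ast}\Omega_{1}=\Omega_{0}=\langle\Omega\rangle$, which is exactly (\ref{Sym}).

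The heart of the argument, and the step I expect to be the main obstacle, is to show that $W_{s}$ is well defined, uniformly bounded, and of order $\varepsilon$, so that $\psi_{s}$ exists on the whole interval $[0,1]$ and $\Phi_{\varepsilon}$ is genuinely near-identity with $\Phi_{0}=\operatorname{id}$. Here the structure of $\Theta$ is again decisive. Splitting coordinates into fast $u=(y,x)$ and slow $v=(p,q)$ and writing $\Omega_{s}$ in block form, the fact that $d\Theta$ has no fast-fast component (once more because each of its terms carries a $dp^{i}$ or $dq^{i}$) forces the fast-fast block of $\Omega_{s}$ to remain the standard nondegenerate, $\varepsilon$-independent block of $dy\wedge dx$. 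Solving $\mathbf{i}_{W_{s}}\Omega_{s}=-\Theta$ by block elimination then reduces to inverting $\frac{1}{\varepsilon}J_{2k}+E$ on the slow block, where $J_{2k}$ is the standard slow symplectic block and $E$ is $O(1)$ and $\varepsilon$-independent; this gives a slow velocity of the form $w_{v}=-\varepsilon(J_{2k}+\varepsilon E)^{-1}\theta$ and hence $W_{s}=\varepsilon\tilde{W}_{s}$ with $\tilde{W}_{s}$ smooth in $(\varepsilon,s)$ up to $\varepsilon=0$ and bounded on $[0,1]\times\bar{N}$. Compactness of $\bar{N}$ then guarantees that the flow of $W_{s}=\varepsilon\tilde{W}_{s}$ stays in $M$ for $s\in[0,1]$ once $\varepsilon$ is small, and that $\Phi_{\varepsilon}=\psi_{1}$ is a diffeomorphism onto its image, $O(\varepsilon)$-close to the identity, with $\Phi_{0}=\operatorname{id}$, completing the construction.
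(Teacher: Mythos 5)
Your proposal is correct, and it follows the same overall strategy the paper attributes to this lemma (Moser homotopy), but the implementation differs in ways worth noting. The paper does not prove nondegeneracy of $\langle\Omega\rangle$ at all --- it defers to the references and only recalls the construction of $\Phi_{\varepsilon}$ --- so your top-power argument ($\varepsilon^{k}\langle\Omega\rangle^{r+k}\to\binom{r+k}{k}(dp\wedge dq)^{k}\wedge(dy\wedge dx)^{r}$, using that every term of $d\Theta$ carries a slow covector) supplies a step the paper leaves implicit, and it is sound. For the symplectomorphism, you run Moser along the naive linear path $\Omega_{s}=\langle\Omega\rangle+s\,d\Theta$ and solve $\mathbf{i}_{W_{s}}\Omega_{s}=-\Theta$ by block elimination in the fast/slow coordinates, obtaining $W_{s}=O(\varepsilon)$ and hence a genuine near-identity family; this is elementary and self-contained. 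The paper instead uses the minimal coupling form: its Moser field $W_{\lambda}$ is constrained to be \emph{horizontal} with respect to the Hannay--Berry connection and is determined by equation (\ref{FOM1}), in which only the slow block $dp\wedge dq-\varepsilon(1-\lambda)\delta_{\Theta}^{\lambda}$ appears, with $\delta_{\Theta}^{\lambda}$ the curvature-type forms (\ref{Curv}). The two constructions prove the same statement, but the paper's version buys extra structure that is exploited later: horizontality of $W_{\lambda}$ and its explicit $\varepsilon$-expansion are what make the computation $\langle H_{1}\rangle=\langle K\rangle$ in Theorem \ref{theo6.2} tractable. Your version buys transparency and a complete nondegeneracy proof, at the cost of losing the geometric description of $W_{s}$ (your block-eliminated field is the same unique solution of the full Moser equation, but its relation to the connection is not visible). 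One small point to make explicit if you write this up: the flow $\psi_{s}$ must be shown to stay inside $M$ for all $s\in[0,1]$; you address this correctly via $W_{s}=\varepsilon\tilde{W}_{s}$ with $\tilde{W}_{s}$ bounded on the compact set $[0,1]\times\bar{N}$, which is exactly where the admissibility hypothesis on $N$ enters.
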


The proof of this lemma is based on the minimal coupling procedure and the
Moser homotopy method, see, for example,  \cite{DaVo-08,GLS-96,Vor-11,VorMis-12}. Here, we recall an algorithm of the
construction of $\Phi_{\varepsilon}$. Let us associate to the 1-form $\Theta$
the following $\lambda$-parameter family of 2-forms on $M$:
\begin{equation}
\delta_{\Theta}^{\lambda}:=d_{p}\Theta+d_{q}\Theta+\frac{(1-\lambda)}
{2}\{\Theta\wedge\Theta\}_{0}, \label{Curv}%
\end{equation}
where
\begin{align*}
\{\Theta\wedge\Theta\}_{0}  &  :=\{\Theta_{i}^{p}\wedge\Theta_{j}^{p}%
\}_{0}dp^{i}\wedge dp^{j}+2\{\Theta_{i}^{p}\wedge\Theta_{j}^{q}\}_{0}%
dp^{i}\wedge dq^{j}\\
&  +\{\Theta_{i}^{q}\wedge\Theta_{j}^{q}\}_{0}dq^{i}\wedge dq^{j}.
\end{align*}
Notice that the vanishing of the form $\delta_{\Theta}^{0}$ (called the
Hamiltonian 2-form of the Hannay-Berry connection) provides the integrability
of the horizontal distribution $\mathbb{H}$. Let $W_{\lambda}$ be a
time-dependent horizontal vector field on $N$ which is uniquely determined by
the equation
\begin{equation}
\mathbf{i}_{W_{\lambda}}\left(  dp\wedge dq-\varepsilon(1-\lambda
)\delta_{\Theta}^{\lambda}\right)  =-\varepsilon\Theta\label{FOM1}%
\end{equation}
Here, we use the fact: for small enough $\varepsilon$ and $\lambda\in
[0,1]$, the 2-form on the left hand side of (\ref{FOM1}) is
nondegenerate on $\bar{N}$ along the horizontal distribution $\mathbb{H}$. Then,
the symplectomorphism $\Phi_{\varepsilon}$ in (\ref{Sym}) is defined as the
time-$1$ flow of $W_{\lambda}$,%
\begin{equation}
\Phi_{\varepsilon}=\operatorname{Fl}_{W_{\lambda}}^{\lambda}\mid_{\lambda=1}.
\label{FOM2}%
\end{equation}

Denote by $\{,\}^{\operatorname*{inv}}$ the nondegenerate Poisson bracket
associated to the symplectic form  $ \langle \Omega \rangle  $ on $N$. Then,
the bracket $\{,\}^{\operatorname*{inv}}$ is $\mathbb{S}^{1}$-invariant and has the
decomposition
\begin{equation}
\{F,G\}^{\operatorname*{inv}}=\{F,G\}_{0}+\varepsilon\Pi_{\Theta
}(dF,dG)+O(\varepsilon^{2}), \label{FOM3}%
\end{equation}
where the bivector field $\Pi_{\Theta}$ is given by (\ref{HLF}).
\begin{theorem}\label{theo6.2}
Under the hypothesis of Theorem \ref{maintheo}, for small enough $\varepsilon$, there exists a near-identity transformation
$\mathcal{\tilde{T}}_{\varepsilon}:N\rightarrow M$ with the following properties:
\begin{itemize}
\item[(a)] $\mathcal{\tilde{T}}_{\varepsilon}$ is a Poisson isomorphism between
Poisson brackets $\{,\}^{\operatorname*{inv}}$ and $\{,\}$;

\item[(b)] the transformed Hamiltonian is of the form%
\[
H\circ\mathcal{\tilde{T}}_{\varepsilon}=H+\varepsilon \langle K \rangle  +O(\varepsilon^{2}),
\]
\end{itemize}
where $K$ is just given by (\ref{COR1}).
\end{theorem}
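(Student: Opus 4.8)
The plan is to construct $\mathcal{\tilde{T}}_{\varepsilon}$ as a composition of two near-identity transformations, mirroring the two-step strategy announced at the beginning of Section 6. First I would invoke the symplectomorphism $\Phi_{\varepsilon}:N\rightarrow M$ produced by the preceding lemma, which satisfies $\Phi_{\varepsilon}^{\ast}\Omega=\langle\Omega\rangle$. Since $\Omega$ is the symplectic form of the $\varepsilon$-dependent Poisson bracket $\{,\}$ and $\langle\Omega\rangle$ is the symplectic form of the $\mathbb{S}^{1}$-invariant bracket $\{,\}^{\operatorname{inv}}$, this identity says precisely that $\Phi_{\varepsilon}$ is a Poisson isomorphism between $\{,\}^{\operatorname{inv}}$ and $\{,\}$. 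This already secures property (a) at the level of a single map; the remaining task is to arrange the Hamiltonian normalization in (b) without destroying this Poisson-isomorphism property.

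Next I would examine the transformed Hamiltonian $H\circ\Phi_{\varepsilon}$. Because $\Phi_{\varepsilon}$ is the time-$1$ flow of the horizontal vector field $W_{\lambda}$ determined by (\ref{FOM1}), and $W_{\lambda}=O(\varepsilon)$ by that equation (the right-hand side is $-\varepsilon\Theta$), I can expand
\[
H\circ\Phi_{\varepsilon}=H+\varepsilon\,\mathcal{L}_{W_{0}}H+O(\varepsilon^{2}),
\]
where $W_{0}$ is the leading-order part of $W_{\lambda}/\varepsilon$. From (\ref{FOM1}) at $\lambda$ small, $W_{0}$ is the horizontal vector field satisfying $\mathbf{i}_{W_{0}}(dp\wedge dq)=-\Theta$, which unwinds to $W_{0}=\Theta_{i}^{q}\frac{\partial}{\partial p^{i}}-\Theta_{i}^{p}\frac{\partial}{\partial q^{i}}$ plus horizontal-lift corrections. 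The key computation is to check that $\mathcal{L}_{W_{0}}H$ reproduces the correction $\langle K\rangle$ up to terms killed by a subsequent averaging; comparing with (\ref{For2}) and (\ref{INP}), one sees that $W_{0}$ is essentially $-2\langle V\rangle$ modulo horizontal pieces, so $\mathcal{L}_{W_{0}}H$ is tied to $\langle K\rangle$ through the identity $\mathcal{L}_{V}H=K$ already established in the proof of Theorem \ref{maintheo}.

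In the second step I would apply a canonical (i.e. $\{,\}^{\operatorname{inv}}$-preserving) averaging transformation $\Psi_{\varepsilon}$ to the $\varepsilon$-order part of $H\circ\Phi_{\varepsilon}$, replacing the not-necessarily-invariant first correction by its $\mathbb{S}^{1}$-average and absorbing the difference via the integrating operator $\mathcal{S}$; this is the standard Lie-transform homological-equation step, using Lemma 2.1's identity (\ref{Pr1}) to solve $\mathcal{L}_{\Upsilon}(\cdot)=(\text{correction})-\langle\text{correction}\rangle$. Setting $\mathcal{\tilde{T}}_{\varepsilon}:=\Phi_{\varepsilon}\circ\Psi_{\varepsilon}$ then yields both (a), since $\Psi_{\varepsilon}$ is a symplectomorphism of $\langle\Omega\rangle$ and hence preserves $\{,\}^{\operatorname{inv}}$ while $\Phi_{\varepsilon}$ intertwines the two brackets, and (b), with the first correction now equal to $\langle K\rangle$ as in (\ref{COR1}).

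The main obstacle I anticipate is the bookkeeping in the first step: showing that the horizontal vector field $W_{0}$ extracted from (\ref{FOM1}) produces exactly $\langle K\rangle$ after averaging, rather than $\langle K\rangle$ plus a spurious multiple of $X_{H}^{(0)}$. This is the same subtlety that appeared in Theorem \ref{maintheo}, where the term $\frac{1}{\omega}(\mathcal{L}_{\langle V\rangle}\omega)X_{H}^{(0)}$ had to be cancelled by the free choice of $Y$; here the analogous longitudinal (flow-direction) ambiguity must instead be shown to be removable by the canonical averaging $\Psi_{\varepsilon}$ or to contribute only at order $\varepsilon^{2}$. Verifying that the horizontal-lift corrections in $W_{0}$ and the curvature terms in $\delta_{\Theta}^{\lambda}$ do not contaminate the $\varepsilon$-order Hamiltonian is the delicate part; everything else is a routine Lie-series expansion combined with the algebraic identities (\ref{Pr1})--(\ref{Pr3}) of Lemma 2.1.
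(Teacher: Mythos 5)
Your proposal matches the paper's proof of Theorem \ref{theo6.2} essentially step for step: the paper likewise sets $\mathcal{\tilde{T}}_{\varepsilon}=\Phi_{\varepsilon}\circ\operatorname{Fl}_{\tilde{X}_{\mathcal{S}(H_{1})}}^{\varepsilon}$, where the second factor is the time-$\varepsilon$ flow solving the homological equation $\{H,G\}^{\operatorname*{inv}}=H_{1}-\langle H_{1}\rangle$ via the integrating operator, and the identification $\langle H_{1}\rangle=\langle K\rangle$ is obtained, exactly as you anticipate, by expanding (\ref{FOM1}) and $\Phi_{\varepsilon}$ at $\varepsilon=0$ (a computation the paper also only asserts). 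The one quibble is a sign in your heuristic for that key computation: solving $\mathbf{i}_{W_{0}}(dp\wedge dq)=-\Theta$ along the horizontal distribution gives $W_{0}=-\Theta_{i}^{q}\operatorname{hor}_{i}^{p}+\Theta_{i}^{p}\operatorname{hor}_{i}^{q}$, whose base component is $+2V$ (unaveraged), not $-2\langle V\rangle$.
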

\begin{proof}
Applying transformation (\ref{FOM2}) to the original Hamiltonian system
$(M,\{,\},H)$, we get the following one%
\begin{equation}
(N,\{,\}^{\operatorname*{inv}},H\circ\Phi_{\varepsilon}=H+\varepsilon
H_{1}+O(\varepsilon^{2})),\label{PM}%
\end{equation}
where the correction term $H_{1}$ is not necessarily $\mathbb{S}^{1}%
$-invariant. To put the Hamiltonian in (\ref{PM}) to an $\mathbb{S}^{1}%
$-invariant normal form of first order, we apply a canonical transformation
defined as the time-$\varepsilon$ flow of the Hamiltonian vector
field $\tilde{X}_{G}$ relative to the Poisson bracket
$\{,\}^{\operatorname*{inv}}$ and a function $G$ which satisfies the
homological equation
\begin{equation}
\{H,G\}^{\operatorname*{inv}}=H_{1}- \langle H_{1} \rangle  \text{.}\label{HOE}%
\end{equation}
In terms of the integrating operator, a particular solution to this equation
is represented as $G=\mathcal{S}(H_{1})$. Expanding equation (\ref{FOM1}) and
the transformation $\Phi_{\varepsilon}$ at $\varepsilon=0$,
we show that the $\mathbb{S}^{1}$-averages of \ $H_{1}$ and $K$ (\ref{COR1})
coincide, $ \langle H_{1} \rangle  = \langle K \rangle  $. Finally, we conclude that the desired normalization
transformation is defined as the composition
\begin{equation}
\mathcal{\tilde{T}}_{\varepsilon}=\Phi_{\varepsilon}\circ\operatorname{Fl}%
_{\tilde{X}_{\mathcal{S}(H_{1})}}^{\varepsilon}.\label{NTN}%
\end{equation}
\end{proof}

Consider the model $\mathbb{S}^{1}$-invariant Hamiltonian system
\begin{equation}
(N,\{,\}^{\operatorname*{inv}},H+\varepsilon \langle K \rangle  ) \label{HM}%
\end{equation}
Property (\ref{Hor4}) imply that the infinitesimal generator $\Upsilon$ is
Hamiltonian relative to the Poisson bracket $\{,\}^{\operatorname*{inv}}$
and $J$, $\Upsilon=\tilde{X}_{J}.$
Therefore, the $\mathbb{S}^{1}$-action is Hamiltonian
on $(N,\{,\}^{\operatorname*{inv}})$ with momentum map $J$. It follows that the
truncated Hamiltonian $H+\varepsilon \langle K \rangle  $ and $J$ Poisson commute. For small
$\varepsilon$, these functions are independent if $H$ satisfies condition
(\ref{Cr}).

\begin{corollary}
Normalization transformation (\ref{NTN}) carries the original slow-fast
Hamiltonian system $(M,\{,\},H)$ into a system which is $\varepsilon^{2}%
$-close to the Hamiltonian model with $\mathbb{S}^{1}$-symmetry (\ref{HM}).
\end{corollary}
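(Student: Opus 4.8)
The plan is to read the statement off directly from Theorem \ref{theo6.2}, the only genuinely new work being the passage from an $O(\varepsilon^{2})$ discrepancy in Hamiltonians to an $O(\varepsilon^{2})$ discrepancy in the dynamics they generate. First I would apply the normalization transformation $\mathcal{\tilde{T}}_{\varepsilon}$ of (\ref{NTN}) to the original system $(M,\{,\},H)$ and describe its image as the pulled-back triple $(N,\mathcal{\tilde{T}}_{\varepsilon}^{\ast}\{,\},H\circ\mathcal{\tilde{T}}_{\varepsilon})$, with Hamiltonian vector field $\mathcal{\tilde{T}}_{\varepsilon}^{\ast}X_{H}$.

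Next I would invoke the two conclusions of Theorem \ref{theo6.2}. Part (a) states that $\mathcal{\tilde{T}}_{\varepsilon}$ is a Poisson isomorphism between $\{,\}^{\operatorname*{inv}}$ and $\{,\}$, so the pulled-back bracket $\mathcal{\tilde{T}}_{\varepsilon}^{\ast}\{,\}$ coincides \emph{exactly} with $\{,\}^{\operatorname*{inv}}$, with no error term, and consequently $\mathcal{\tilde{T}}_{\varepsilon}^{\ast}X_{H}=\tilde{X}_{H\circ\mathcal{\tilde{T}}_{\varepsilon}}$, where $\tilde{X}_{F}$ denotes the Hamiltonian vector field of $F$ relative to $\{,\}^{\operatorname*{inv}}$. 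Part (b) gives $H\circ\mathcal{\tilde{T}}_{\varepsilon}=H+\varepsilon\langle K\rangle+O(\varepsilon^{2})$. Hence the transformed system is $(N,\{,\}^{\operatorname*{inv}},H+\varepsilon\langle K\rangle+O(\varepsilon^{2}))$, which differs from the model (\ref{HM}) only in its Hamiltonian, and only by a term of order $\varepsilon^{2}$.

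Finally I would translate this into closeness of the vector fields. By linearity of $F\mapsto\tilde{X}_{F}$, one has $\tilde{X}_{H+\varepsilon\langle K\rangle+O(\varepsilon^{2})}=\tilde{X}_{H+\varepsilon\langle K\rangle}+\tilde{X}_{O(\varepsilon^{2})}$, so it remains to verify that $\tilde{X}_{O(\varepsilon^{2})}$ is genuinely $O(\varepsilon^{2})$ as a vector field. This is the one point requiring care: although the symplectic form $\Omega$ carries a factor $1/\varepsilon$, the expansion (\ref{FOM3}) shows that the Poisson tensor $\{,\}^{\operatorname*{inv}}=\{,\}_{0}+\varepsilon\Pi_{\Theta}+O(\varepsilon^{2})$ is regular at $\varepsilon=0$, hence uniformly bounded on the compact closure $\bar{N}$ for small $\varepsilon$; the admissibility of $N$ is exactly what supplies this uniformity. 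Contracting a uniformly bounded bivector with the differential of an $O(\varepsilon^{2})$ function then yields an $O(\varepsilon^{2})$ vector field, so $\mathcal{\tilde{T}}_{\varepsilon}^{\ast}X_{H}=\tilde{X}_{H+\varepsilon\langle K\rangle}+O(\varepsilon^{2})$, i.e. the image system is $\varepsilon^{2}$-close to the model (\ref{HM}). The only substantive step is thus the boundedness check; the remainder is bookkeeping built on Theorem \ref{theo6.2}.
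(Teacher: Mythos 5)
Your proposal is correct and follows the same route the paper intends: the corollary is an immediate consequence of Theorem \ref{theo6.2}(a) and (b), which identify the transformed bracket exactly with $\{,\}^{\operatorname*{inv}}$ and the transformed Hamiltonian with $H+\varepsilon\langle K\rangle$ up to $O(\varepsilon^{2})$. Your added check that the regular expansion (\ref{FOM3}) of $\{,\}^{\operatorname*{inv}}$ on the compact closure $\bar{N}$ converts the $O(\varepsilon^{2})$ Hamiltonian discrepancy into an $O(\varepsilon^{2})$ discrepancy of vector fields is a legitimate point of care that the paper leaves implicit.
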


Moreover, an easy verification, by using (\ref{FOM3}), shows that the first
order term in the Taylor expansion at $\varepsilon=0$ of the Hamiltonian
vector field of system (\ref{HM}) coincides with the normal form of first
order in (\ref{NFG}).

\begin{remark}
In fact, by the standard Deprit normal form argument \cite{Cush-84,Dep-69} and by the
fact that homological equation of the type (\ref{HOE}) is solvable, one can
extend $\mathcal{\tilde{T}}_{\varepsilon}$ to a normalization transformation
of arbitrary order $n\geq2$  in $\varepsilon$. This means that one can
correct formula (\ref{INT1}) to get an approximate first integral of system
(\ref{SF1}), (\ref{SF2}) which satisfies condition  (\ref{APF1})
$\operatorname*{mod}O(\varepsilon^{n})$.
\end{remark}
\begin{remark}
Theorem \ref{maintheo} and Theorem \ref{theo6.2} carry a general character and can be directly
generalized to a class of slow-fast Hamiltonian systems on a phase space
$M=M_{0}\times M_{1}$ which is the product of an exact (fast) symplectic
manifold $M_{0}$ and an arbitrary (slow) symplectic manifold $M_{1}$ (see, also \cite{Vor-11,VorMis-12}).
\end{remark}

\section{The Quadratic Case}

To illustrate our general results, we consider the particular case when $r=1$
and the Hamiltonian $H$ is a quadratic function in the fast variables
$\mathbf{z}=(y,x)\in\mathbb{R}^{2}$. Let us associated to every
matrix-valued function $\mathbf{A}\in\operatorname*{sl}(2,\mathbb{R})\otimes
C^{\infty}(\mathbb{R}_{p,q}^{2k})$ the following function
\[
Q_{\mathbf{A}}=-\frac{1}{2}\mathbf{JAz}\cdot\mathbf{z},
\]
where $\mathbf{J}=\left(
\begin{array}
[c]{cc}%
0 & -1\\
1 & 0
\end{array}
\right)  $. Then, the Hamiltonian vector field relative to the bracket
$\{,\}_{0}$ and $Q_{\mathbf{A}}$ is given by $X_{Q_{\mathbf{A}}}%
^{(0)}=\mathbf{Az}\cdot\frac{\partial}{\partial\mathbf{z}}$. Consider
slow-fast Hamiltonian system (\ref{SF1}), (\ref{SF2}) with Hamiltonian of the
form
\[
H=h+\omega Q_{\mathbf{A}}%
\]
for some smooth functions $h=h(p,q),$ $\omega=\omega(p,q) >  0$ and
$\mathbf{A}=\mathbf{A}(p,q)\in\operatorname*{sl}(2;\mathbb{R})$. We assume
that $\det\mathbf{A}=1$ on a certain open domain in $\mathbb{R}_{p,q}^{2k}$.
Then, the flow of $X_{Q_{\mathbf{A}}}^{(0)}$ is periodic with frequency
function $\omega$ and the associated $\mathbb{S}^{1}$-action is given by the
linear $2\pi$-periodic flow $\operatorname{Fl}_{\Upsilon}^{t}=\cos
t\mathbf{I+}\sin t\mathbf{A}$. The corresponding momentum map (\ref{AH2}) is
$J=Q_{\mathbf{A}}$. In this case, operators in (\ref{AV1}) and (\ref{AV2})
possess the following properties.
\begin{lemma}
For arbitrary $\mathbf{B},\mathbf{C}\in\operatorname*{sl}(2,\mathbb{R})\otimes
C^{\infty}(\mathbb{R}_{p,q}^{2k})$, the following identities hold%
\begin{equation}
\langle Q_{\mathbf{B}} \rangle  =\frac{1}{2}Q_{\mathbf{B}-\mathbf{ABA}}, \label{Q1}%
\end{equation}%
\begin{equation}
\mathcal{S}(Q_{\mathbf{B}})=\frac{1}{4}Q_{[\mathbf{A},\mathbf{B}]}, \label{Q2}%
\end{equation}%
\begin{align}
&   \langle Q_{\mathbf{B}}Q_{\mathbf{C}} \rangle  =\frac{1}{4}Q_{\mathbf{B}-\mathbf{ABA}%
}Q_{\mathbf{C}-\mathbf{ACA}}\label{Q3}\\
&  +\frac{1}{8}Q_{\mathbf{B}+\mathbf{ABA}}Q_{\mathbf{C}+\mathbf{ACA}}+\frac
{1}{8}Q_{[\mathbf{B},\mathbf{A}]}Q_{[\mathbf{C},\mathbf{A}]}.\nonumber
\end{align}

\end{lemma}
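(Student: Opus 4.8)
The plan is to reduce all three identities to a single pullback formula coming from the special structure of $\mathbf{A}$. Since $\mathbf{A}\in\operatorname{sl}(2,\mathbb{R})$ (so $\operatorname{tr}\mathbf{A}=0$) with $\det\mathbf{A}=1$, the Cayley--Hamilton theorem forces $\mathbf{A}^{2}=-\mathbf{I}$; consequently $g_{t}:=\operatorname{Fl}_{\Upsilon}^{t}=\cos t\,\mathbf{I}+\sin t\,\mathbf{A}$ is a one-parameter subgroup of $SL(2,\mathbb{R})=Sp(2,\mathbb{R})$ with $g_{t}^{-1}=\cos t\,\mathbf{I}-\sin t\,\mathbf{A}$. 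First I would record how the flow acts on the functions $Q_{\mathbf{B}}$. The fast flow acts linearly by $\mathbf{z}\mapsto g_{t}\mathbf{z}$, with the slow variables (on which $\mathbf{A},\mathbf{B},\mathbf{C}$ depend) frozen, so
\[
(\operatorname{Fl}_{\Upsilon}^{t})^{\ast}Q_{\mathbf{B}}(\mathbf{z})=-\frac{1}{2}\mathbf{J}\mathbf{B}g_{t}\mathbf{z}\cdot g_{t}\mathbf{z}=-\frac{1}{2}g_{t}^{\mathsf{T}}\mathbf{J}\mathbf{B}g_{t}\mathbf{z}\cdot\mathbf{z}.
\]
The symplectic relation $g_{t}^{\mathsf{T}}\mathbf{J}g_{t}=\mathbf{J}$ (valid since $g_{t}\in Sp(2,\mathbb{R})$) turns this into the conjugation rule $(\operatorname{Fl}_{\Upsilon}^{t})^{\ast}Q_{\mathbf{B}}=Q_{g_{t}^{-1}\mathbf{B}g_{t}}$, and, using the linearity of $\mathbf{B}\mapsto Q_{\mathbf{B}}$ together with $\mathbf{A}^{2}=-\mathbf{I}$ to expand $g_{t}^{-1}\mathbf{B}g_{t}$, I obtain the master formula
\[
(\operatorname{Fl}_{\Upsilon}^{t})^{\ast}Q_{\mathbf{B}}=\cos^{2}t\,Q_{\mathbf{B}}-\sin^{2}t\,Q_{\mathbf{ABA}}+\cos t\sin t\,Q_{[\mathbf{B},\mathbf{A}]}.
\]

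Identities (\ref{Q1}) and (\ref{Q2}) then follow by substituting the master formula into definitions (\ref{AV1}) and (\ref{AV2}) and evaluating elementary integrals over $[0,2\pi]$. For (\ref{Q1}) I would use $\langle\cos^{2}\rangle=\langle\sin^{2}\rangle=\frac{1}{2}$ and $\langle\cos t\sin t\rangle=0$; the commutator term drops out and the remaining two terms combine, by linearity, into $\frac{1}{2}Q_{\mathbf{B}-\mathbf{ABA}}$. For (\ref{Q2}) the weight $t-\pi$ is odd about $t=\pi$: after the substitution $u=t-\pi$ the integrals of $(t-\pi)\cos^{2}t$ and $(t-\pi)\sin^{2}t$ vanish, while $\frac{1}{2\pi}\int_{0}^{2\pi}(t-\pi)\cos t\sin t\,dt=-\frac{1}{4}$, so only the commutator term survives and $\mathcal{S}(Q_{\mathbf{B}})=-\frac{1}{4}Q_{[\mathbf{B},\mathbf{A}]}=\frac{1}{4}Q_{[\mathbf{A},\mathbf{B}]}$.

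The bulk of the work is (\ref{Q3}), and this coefficient bookkeeping is the step I expect to be the main obstacle. Since pullback is multiplicative on functions, I would multiply two copies of the master formula (one with $\mathbf{B}$, one with $\mathbf{C}$) and average term by term. This generates nine products weighted by $\cos^{4}t$, $\sin^{4}t$, $\cos^{2}t\sin^{2}t$, and odd monomials in $\cos t,\sin t$; the relevant averages are $\langle\cos^{4}\rangle=\langle\sin^{4}\rangle=\frac{3}{8}$, $\langle\cos^{2}\sin^{2}\rangle=\frac{1}{8}$, and every average containing an odd power of $\cos t\sin t$ vanishes. The surviving combination is
\[
\frac{3}{8}Q_{\mathbf{B}}Q_{\mathbf{C}}-\frac{1}{8}\big(Q_{\mathbf{B}}Q_{\mathbf{ACA}}+Q_{\mathbf{ABA}}Q_{\mathbf{C}}\big)+\frac{3}{8}Q_{\mathbf{ABA}}Q_{\mathbf{ACA}}+\frac{1}{8}Q_{[\mathbf{B},\mathbf{A}]}Q_{[\mathbf{C},\mathbf{A}]}.
\]
To finish, I would expand the right-hand side of (\ref{Q3}) using $Q_{\mathbf{B}\mp\mathbf{ABA}}=Q_{\mathbf{B}}\mp Q_{\mathbf{ABA}}$: the products $Q_{\mathbf{B}}Q_{\mathbf{C}}$ and $Q_{\mathbf{ABA}}Q_{\mathbf{ACA}}$ each collect $\frac{1}{4}+\frac{1}{8}=\frac{3}{8}$, the cross-terms collect $-\frac{1}{4}+\frac{1}{8}=-\frac{1}{8}$, and the commutator product collects $\frac{1}{8}$, matching the averaged expression exactly. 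No genuine difficulty remains once the conjugation rule is in place; the only care needed is to keep the symmetric cross-term structure straight and to confirm that the odd-weight integrals vanish.
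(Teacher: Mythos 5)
Your proposal is correct and follows exactly the route the paper indicates: the paper's proof consists of the remark that (\ref{Q1})--(\ref{Q3}) follow ``by a direct computation'' from $\mathbf{A}^{-1}=-\mathbf{A}$ and $\mathbf{J}\mathbf{A}=-\mathbf{A}^{T}\mathbf{J}$, and your conjugation rule $(\operatorname{Fl}_{\Upsilon}^{t})^{\ast}Q_{\mathbf{B}}=Q_{g_{t}^{-1}\mathbf{B}g_{t}}$ together with the elementary trigonometric averages is precisely that computation carried out in full. All coefficients check out, including the $-\tfrac{1}{4}$ from the weighted integral of $\cos t\sin t$ and the $\tfrac{3}{8},-\tfrac{1}{8},\tfrac{1}{8}$ bookkeeping in (\ref{Q3}).
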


Using the identities $\mathbf{A}^{-1}=-\mathbf{A}$ and $\mathbf{JA=-A}%
^{T}\mathbf{J}$, one can verify identities (\ref{Q1})-(\ref{Q3}) by a direct
computation. As a consequence of (\ref{Q2}), we get that the components of
1-form $\Theta$ in (\ref{Hor2}) are given by the formulas%

\[
\Theta_{i}^{p}=\frac{1}{2}Q_{\mathbf{A}\frac{\partial\mathbf{A}}{\partial
p^{i}}},\text{ \ }\Theta_{i}^{q}=\frac{1}{2}Q_{\mathbf{A}\frac{\partial
\mathbf{A}}{\partial q^{i}}}.
\]
In this case, it is easy to see that $\delta_{\Theta}^{0}=0$ and hence the
curvature of the Hannay-Berry connection is zero. Combining above relations
with (\ref{Q1}), we show that the $\mathbb{S}^{1}$-invariant function $ \langle K \rangle  $ in
(\ref{COR1}) is represented as follows
\[
 \langle K \rangle  =\frac{\omega}{4}\left(  Q_{\mathbf{A}\frac{\partial\mathbf{A}}{\partial
p^{i}}}Q_{\frac{\partial\mathbf{A}}{\partial q^{i}}}-Q_{\mathbf{A}%
\frac{\partial\mathbf{A}}{\partial q^{i}}}Q_{\frac{\partial\mathbf{A}%
}{\partial p^{i}}}\right).
\]
Finally, an easy computation by using (\ref{Q2}) and (\ref{Q3}) shows that the
approximate first integral $F$ (\ref{INT1}) is written in the form
\[
F=Q_{\mathbf{A}}-\frac{\varepsilon}{4\omega}\left(  Q_{[\mathbf{A}%
,\mathbf{B}]}+Q_{\mathbf{A}}Q_{[\mathbf{A},\mathbf{C}]}\right)  ,
\]
where%
\[
\mathbf{B}:=\{h,\mathbf{A}\}_{1}, \ \ \mathbf{C:}=\{\omega
,\mathbf{A}\}_{1}%
\]
and the Poisson bracket between $h$ and $\mathbf{A}$ is defined entry by entry.

\textbf{ACKNOWLEDGMENTS.} We thank M.V.Karasev and J. A. Vallejo for helpful discussions.

\end{document}